
\documentclass{egpubl}
\usepackage{eg2020}
 
\usepackage{amsmath,amsfonts}
\usepackage{wrapfig}
\usepackage{subcaption}
\usepackage{float}
\usepackage{textcomp}
\usepackage{import}
\usepackage{soul}
\usepackage{graphics}

\DeclareMathOperator{\grad}{grad}
\DeclareMathOperator{\re}{Re}
\DeclareMathOperator{\im}{Im}
\newcommand{\toremove}[1]{{}}

\newtheorem{theorem}{Theorem}[section]

 \JournalPaper         %
\usepackage[T1]{fontenc}
\usepackage{dfadobe}  

\usepackage{cite}  %
\BibtexOrBiblatex%
\electronicVersion
\PrintedOrElectronic
\ifpdf \usepackage[pdftex]{graphicx} \pdfcompresslevel=9
\else \usepackage[dvips]{graphicx} \fi

\usepackage{egweblnk} 
%

\title[A curvature and density-based generative representation of shapes]%
      {A curvature and density-based generative representation of shapes}


\author[Z. Ye \& N. Umetani \& T. Igarashi \& T. Hoffmann]
{\parbox{\textwidth}{\centering Z. Ye$^{1}$
        and N. Umetani$^{2}$
and T. Igarashi$^{2}$
and T. Hoffmann$^{1}$
        }
        \\
{\parbox{\textwidth}{\centering $^1$TU Munich, Germany\\
         $^2$ The University of Tokyo, Japan}
}
}


%

\begin{document}

\teaser{
\centering
    \def\svgwidth{0.95\textwidth}
\begingroup%
  \makeatletter%
  \providecommand\color[2][]{%
    \errmessage{(Inkscape) Color is used for the text in Inkscape, but the package 'color.sty' is not loaded}%
    \renewcommand\color[2][]{}%
  }%
  \providecommand\transparent[1]{%
    \errmessage{(Inkscape) Transparency is used (non-zero) for the text in Inkscape, but the package 'transparent.sty' is not loaded}%
    \renewcommand\transparent[1]{}%
  }%
  \providecommand\rotatebox[2]{#2}%
  \newcommand*\fsize{\dimexpr\f@size pt\relax}%
  \newcommand*\lineheight[1]{\fontsize{\fsize}{#1\fsize}\selectfont}%
  \ifx\svgwidth\undefined%
    \setlength{\unitlength}{732.95894764bp}%
    \ifx\svgscale\undefined%
      \relax%
    \else%
      \setlength{\unitlength}{\unitlength * \real{\svgscale}}%
    \fi%
  \else%
    \setlength{\unitlength}{\svgwidth}%
  \fi%
  \global\let\svgwidth\undefined%
  \global\let\svgscale\undefined%
  \makeatother%
  \begin{picture}(1,0.34468595)%
    \lineheight{1}%
    \setlength\tabcolsep{0pt}%
    \put(0.3800692,0.32481861){\color[rgb]{0,0,0}\makebox(0,0)[lt]{\lineheight{1.25}\smash{\begin{tabular}[t]{l}OGN\end{tabular}}}}%
    \put(0.00482901,0.33432022){\color[rgb]{0,0,0}\makebox(0,0)[lt]{\lineheight{1.25}\smash{\begin{tabular}[t]{l}Ground Truth\end{tabular}}}}%
    \put(0.5154309,0.32160269){\color[rgb]{0,0,0}\makebox(0,0)[lt]{\lineheight{1.25}\smash{\begin{tabular}[t]{l}point-cloud AE\end{tabular}}}}%
    \put(0.19588432,0.32452626){\color[rgb]{0,0,0}\makebox(0,0)[lt]{\lineheight{1.25}\smash{\begin{tabular}[t]{l}AtlasNet\end{tabular}}}}%
    \put(0.88774749,0.3242339){\color[rgb]{0,0,0}\makebox(0,0)[lt]{\lineheight{1.25}\smash{\begin{tabular}[t]{l}Ours\end{tabular}}}}%
    \put(0,0){\includegraphics[width=\unitlength,page=1]{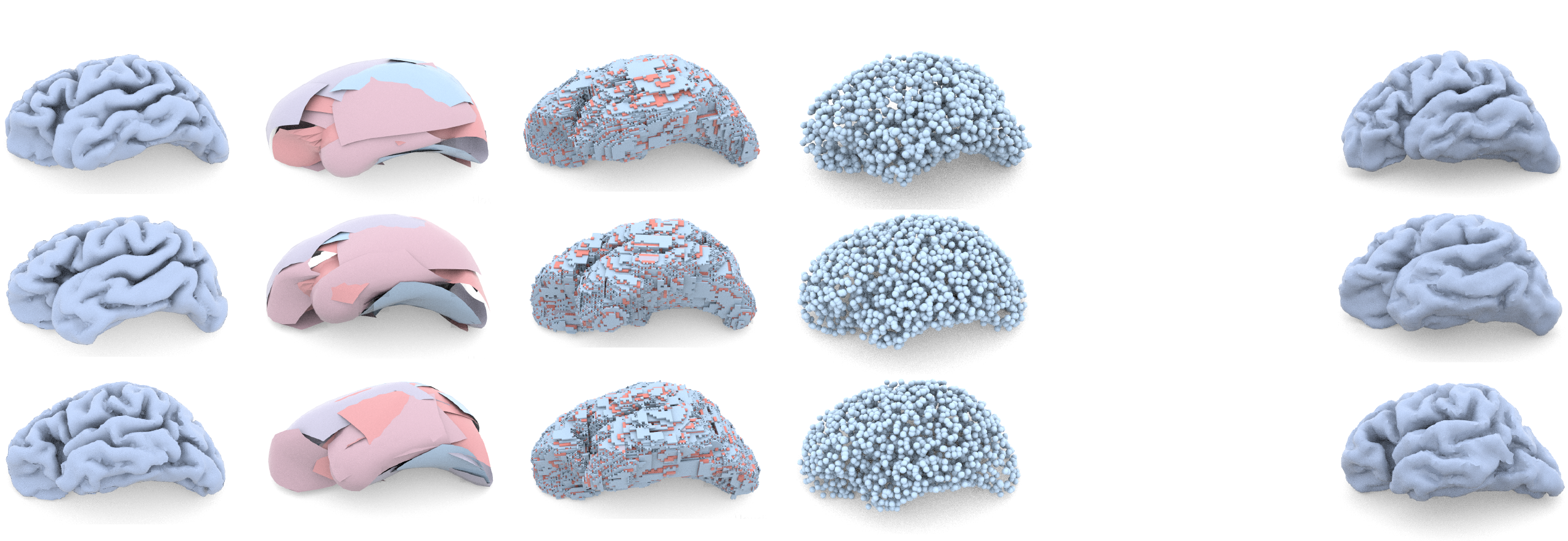}}%
    \put(0.00892201,0.31634027){\color[rgb]{0,0,0}\makebox(0,0)[lt]{\lineheight{1.25}\smash{\begin{tabular}[t]{l}(FreeSurfer)\end{tabular}}}}%
    \put(0,0){\includegraphics[width=\unitlength,page=2]{brain_comparison3.pdf}}%
    \put(0.70955585,0.32496479){\color[rgb]{0,0,0}\makebox(0,0)[lt]{\lineheight{1.25}\smash{\begin{tabular}[t]{l}Baseline\end{tabular}}}}%
    \put(0,0){\includegraphics[width=\unitlength,page=3]{brain_comparison3.pdf}}%
  \end{picture}%
\endgroup%

    \caption{Brain autoencoder. We build a curvature-to-curvature autoencoder and compare to the models based on point clouds, the AtlasNet \cite{groueix2018} (point clouds to surface) and the point-cloud AE \cite{Achlioptas2018LearningRA} (point clouds to point clouds), the voxel-based model OGN \cite{ogn2017} (IDs to voxels), and a mesh-based baseline model, which replaces the curvature in our model with vertex coordinates. All the neural networks, except for OGN, are trained on $1400$ cortical surfaces and validated on $200$ surfaces, which do not appear in the training set. Three of the predicted surfaces from the validation data are shown above. Although all the models can restore the brain structure in a large scale, only our model preserves the local fine structure. For more details see Section ~\ref{subsec:closed_surf_gen}.}
    \label{fig:brain1}
}

\maketitle
\begin{abstract}
This paper introduces a generative model for 3D surfaces based on a representation of shapes with mean curvature and metric, which are invariant under rigid transformation. Hence, compared with existing 3D machine learning frameworks, our model substantially reduces the influence of translation and rotation. In addition, the local structure of shapes will be more precisely captured, since the curvature is explicitly encoded in our model. Specifically, every surface is first conformally mapped to a canonical domain, such as a unit disk or a unit sphere. Then, it is represented by two functions: the mean curvature half-density and the vertex density, over this canonical domain. Assuming that input shapes follow a certain distribution in a latent space, we use the variational autoencoder to learn the latent space representation. After the learning, we can generate variations of shapes by randomly sampling the distribution in the latent space. Surfaces with triangular meshes can be reconstructed from the generated data by applying isotropic remeshing and spin transformation, which is given by Dirac equation. We demonstrate the effectiveness of our model on datasets of man-made and biological shapes and compare the results with other methods.
\begin{CCSXML}
<ccs2012>
<concept>
<concept_id>10010147.10010257.10010293.10010319</concept_id>
<concept_desc>Computing methodologies~Learning latent representations</concept_desc>
<concept_significance>500</concept_significance>
</concept>
<concept>
<concept_id>10010147.10010371.10010396.10010398</concept_id>
<concept_desc>Computing methodologies~Mesh geometry models</concept_desc>
<concept_significance>500</concept_significance>
</concept>
</ccs2012>
\end{CCSXML}

\ccsdesc[500]{Computing methodologies~Learning latent representations}
\ccsdesc[500]{Computing methodologies~Mesh geometry models}

\printccsdesc   
\end{abstract}  

\section{Introduction}
\begin{figure*}
	\centering
    \def\svgwidth{0.95\textwidth}
    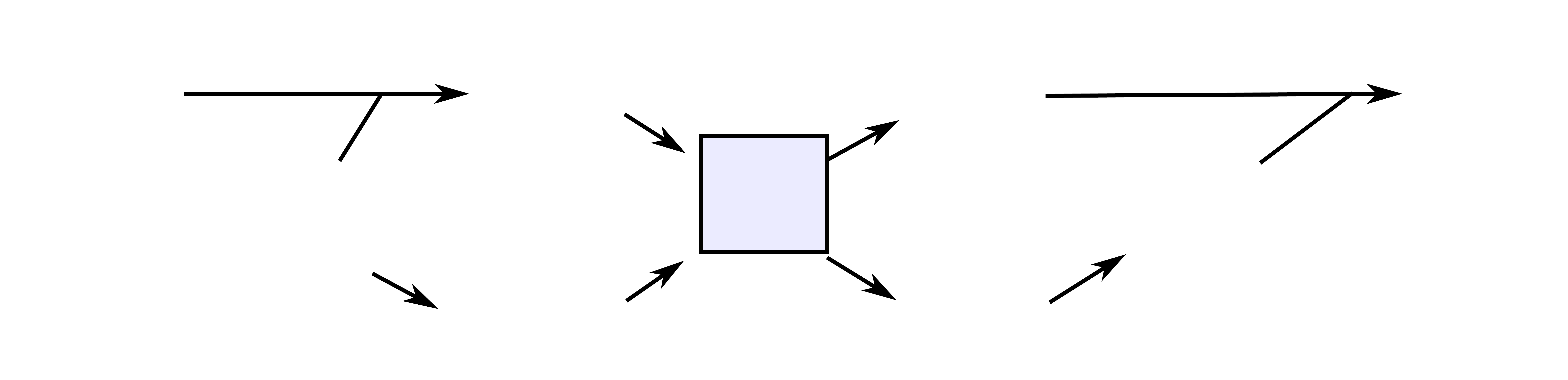
    \caption{The pipeline of our model for generating variant shapes: $(a)$ the conformal parameterization (Section \ref{subsec:conformal_param}), $(b)$ the density function extraction (Section \ref{subsec:rep}), $(c)$ the mean curvature half-density extraction (Section \ref{subsec:rep}), $(d)$ learning and generating (Section \ref{subsec:cnn}), $(e)$ the isotropic remeshing (Section \ref{subsec:rec_param}), $(f)$ solving the Dirac equation and applying the spin transformation (Section \ref{subsec:rec}).}
    \label{fig:pipeline}
\end{figure*}
While the convolutional neural network has achieved significant success in 2D image processing, more and more attention has recently been drawn to applying the technique to the domain of 3D shapes. Unlike 2D images, which are typically represented by a multidimensional tensor, the representation of 3D shapes is usually unstructured, hence the convolutional neural network is not directly applicable. Thus the main challenge is how to create a suitable representation for 3D shapes which can take advantage of the state-of-art machine learning frameworks. Several such representations based on point clouds \cite{Fan2017APS, Achlioptas2018LearningRA,groueix2018}, volumetric data \cite{ogn2017,Wang_2018,Wang2017OCNNOC}, and meshes \cite{Ben_Hamu_2018} have been proposed with different applications. However, all these representations are built on the positional data such as the coordinates of points, vertices or voxels.

In this paper, we propose a 3D deep generative model based on mean curvature and metric, which in discrete case are expressed by two functions that are invariant under Euclidean motion. It has the following advantages against the existing models:

\begin{wrapfigure}{r}{0.3\columnwidth}
\vspace{-0.8\intextsep}
\hspace*{-1\columnsep}
\includegraphics[scale=0.15]{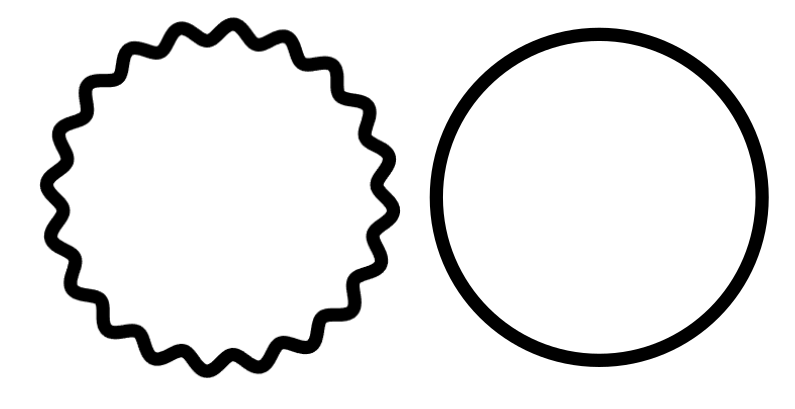}
\vspace{-1\intextsep}
\end{wrapfigure}

Firstly, our model preserves more detailed structure in case that the curvature plays a critical role, especially when the surface is highly folded and convoluted like the cortical surfaces in Figure~  \ref{fig:brain1}. The convolutional neural network (CNN) is known to be good at capturing not only the global features but also the local fine structure of data. Its effectiveness, however, relies on a proper distance function defined on the space of features. For example, the Euclidean distance between two vectors is a straightforward option. As the result, the bumpy circle (inset) will tend to be deformed through the neural network to the round circle, which is more regular and is close to the bumpy one under the measurement by Euclidean distance. In contrast, we adopt curvature representation and subsequently the distance between curvatures, by which two circles are clearly distinguishable, hence the small hills will be safely preserved. Secondly, our model is less affected by rigid transformation and uniform scaling. Thanks to the invariant quantities that constitute our representation and the CNN on sphere (see Section \ref{subsec:unaligned} for detailed discussion), we provide a simple and efficient way to handle the data without a consistent alignment.

The input shapes for our model are required to be surfaces with consistent simply-connected topology, e.g., the disk-like surface or the spherical surface. We first map the input surface to a canonical domain such as a sphere, where mean curvature and vertex density are extracted and recorded as the input data for the neural network. For generative models like VAE, the output is a variant of the input so it has the same form as the input. To reconstruct the shape, we first create a conformal parameterization by randomly sampling the points with respect to the generated density function and applying the isotropic remeshing. Then, we deform the mesh gradually towards the target shape with the prescribed mean curvature (see the attached videos).

A curvature-to-shape reconstruction algorithm with high accuracy is critical for generating plausible shapes. We follow the basic idea in \cite{crane_robust_2013} and \cite{Ye_2018}. The deformation between the domain and target shapes is given by the solution of the Dirac equation. We propose a modified equation with a larger solution space and it results in the reconstruction comparatively closer to the target shape. Furthermore, one might be concerned about the stability of curvature-based methods, since tiny errors in curvature might accumulate across the surface and significantly affect the final reconstruction. Indeed, in our case, previous methods fail to locally scale the shape correctly at regions with large curvature. In fact, it is hard to directly manipulate the local scaling with the Dirac equation. Therefore we design a new algorithm inspired by Chern et al. \shortcite{chern2015_close-to-conformal-deformations-of-volumes} to calibrate the area scaling factor.  This compensates for the shortcoming of the Dirac equation and significantly stabilize the reconstruction. 

We evaluate our reconstruction algorithm on several shapes, showing that our method outperforms previous methods visually and quantitatively. In addition to some preliminary applications such as shape remeshing, interpolation and clustering, we demonstrate randomly generated shapes from various datasets and compare to other 3D generative models.

In summary, the contribution of this paper is 1) an improved algorithm for shape reconstruction from curvature with area calibration and 2) a 3D shape deep learning framework based on curvature.

\begin{figure*}
\centering
\def\svgwidth{\textwidth}
\begingroup%
  \makeatletter%
  \providecommand\color[2][]{%
    \errmessage{(Inkscape) Color is used for the text in Inkscape, but the package 'color.sty' is not loaded}%
    \renewcommand\color[2][]{}%
  }%
  \providecommand\transparent[1]{%
    \errmessage{(Inkscape) Transparency is used (non-zero) for the text in Inkscape, but the package 'transparent.sty' is not loaded}%
    \renewcommand\transparent[1]{}%
  }%
  \providecommand\rotatebox[2]{#2}%
  \newcommand*\fsize{\dimexpr\f@size pt\relax}%
  \newcommand*\lineheight[1]{\fontsize{\fsize}{#1\fsize}\selectfont}%
  \ifx\svgwidth\undefined%
    \setlength{\unitlength}{596.64131225bp}%
    \ifx\svgscale\undefined%
      \relax%
    \else%
      \setlength{\unitlength}{\unitlength * \real{\svgscale}}%
    \fi%
  \else%
    \setlength{\unitlength}{\svgwidth}%
  \fi%
  \global\let\svgwidth\undefined%
  \global\let\svgscale\undefined%
  \makeatother%
  \begin{picture}(1,0.29583245)%
    \lineheight{1}%
    \setlength\tabcolsep{0pt}%
    \put(0,0){\includegraphics[width=\unitlength,page=1]{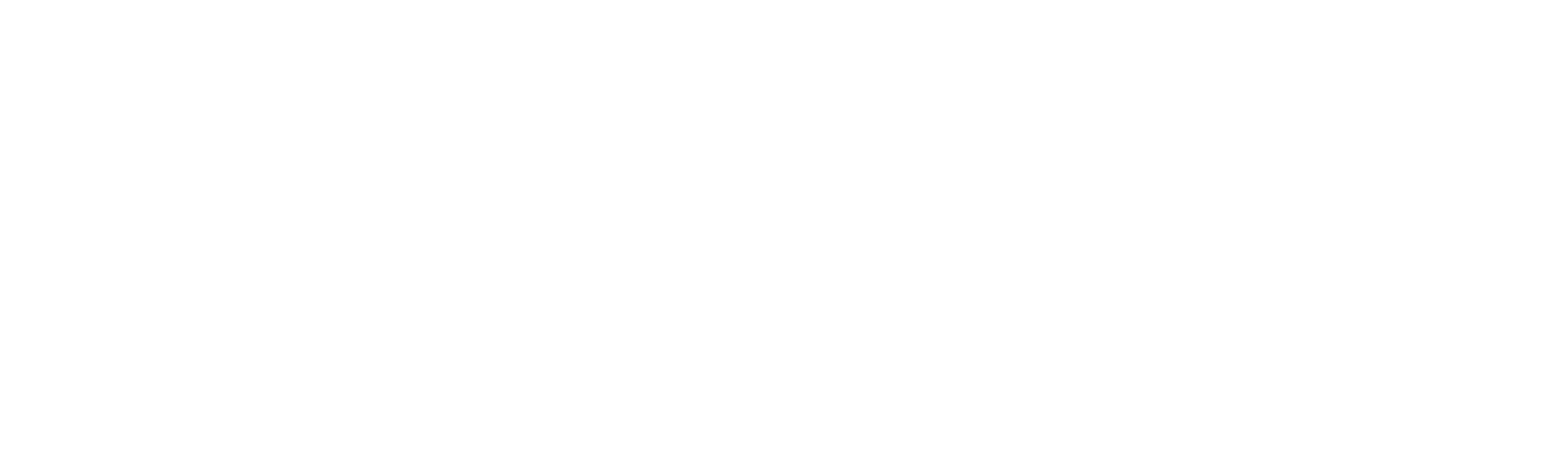}}%
    \put(0.9272152,0.23027368){\color[rgb]{0,0,0}\makebox(0,0)[lt]{\lineheight{1.25}\smash{\begin{tabular}[t]{l}$1.8$\end{tabular}}}}%
    \put(0.914645,0.07027902){\color[rgb]{0,0,0}\makebox(0,0)[lt]{\lineheight{1.25}\smash{\begin{tabular}[t]{l}$-1.2$\end{tabular}}}}%
    \put(0,0){\includegraphics[width=\unitlength,page=2]{animals.pdf}}%
    \put(0.8647145,0.2821438){\color[rgb]{0,0,0}\makebox(0,0)[lt]{\lineheight{1.25}\smash{\begin{tabular}[t]{l}\small mean curvature\end{tabular}}}}%
    \put(0.88087615,0.26687978){\color[rgb]{0,0,0}\makebox(0,0)[lt]{\lineheight{1.25}\smash{\begin{tabular}[t]{l}\small half-density\end{tabular}}}}%
  \end{picture}%
\endgroup%

\caption{The spherical conformal parameterizations of two animals are aligned by a M\"obius transformation with three landmark points. Then, they are packed into tensors with dimension $320\times 32\times 32 \times 2$. This figure shows a linear interpolation between the curvature representation of two shapes and the resulting shape reconstruction from the curvature representation.}\label{fig:intp}
\end{figure*}

\section{Related Work}
\subsection{Which invariant quantities determine an immersed surface in $\mathbb{R}^3$?}
\label{subsec:intrinsic}
It is well-known that an immersed surface in $\mathbb{R}^3$ is determined up to a Euclidean motion by its first and second fundamental forms. However, their representation depends on the choice of coordinate. Hence, in order to consistently represent 3D shapes based on the two fundamental forms, an identical triangulation for all shapes, which is not always possible, is required.

Other options are point-wise shape descriptors such as the heat kernel signature \cite{sun2009concise} and the wave kernel signature \cite{aubry2011wave}. Indeed, they have been employed in discriminative models for 3D shape classification and segmentation \cite{doi:10.1111/cgf.12693}. But they can hardly be used for generative models, because it is unclear whether these shape descriptors completely determine the shapes or how to reconstruct shapes from them.

The idea of this paper comes originally from 
Bonnet \cite{bonnet1967memoire}. In fact, except for some very special cases, an immersed surface is completely determined by  conformal structure, regular homotopy class and mean curvature half-density, which is a scale-independent variant of the mean curvature \cite{kamberov1998prescribing}. The exceptions, called the Bonnet immersions, includes minimal surfaces, constant mean curvature surfaces and Bonnet pairs. In our case the regular homotopy class is unnecessary, since we only consider the simply-connected surfaces which have only one unique regular homotopy class \cite{pinkall1985regular}. In summary, generic simply-connected immersed surfaces are uniquely determined by the conformal structure and the mean curvature half-density.

\subsection{Quaternions, Dirac-type operators and Spin Transformation}
\label{subsec:dirac}
Now, we sketch the idea how to construct a surface from the mean curvature half-density. Roughly speaking, for every point on the surface we rotate its infinitesimal neighbourhood with a quaternion. Recall that a quaternion is a 4-dimensional vector $q = a + bi + cj+ dj$ with the multiplicative structure:
\begin{align*}
i^2&=j^2=k^2 = -1,\\
ij=-ji = k,& \quad jk = -kj = i,\quad ki = -ik = j.
\end{align*}
We always identify vectors in $\mathbb{R}^3$ as pure imaginary quaternions 
\[(x,y,z) \mapsto xi + yj +zk.\]
Any quaternion can be written as $q = \lvert q\rvert(\cos \frac{\theta}{2} + \sin \frac{\theta}{2}u)$, where $\theta \in [0,2\pi)$ and $u\in \mathbb{R}^3 \subset \mathbb{H}$. It is well known that $q$ gives a scale rotation in $\mathbb{R}^3$ with scaling factor $\lvert q\rvert^2$,   rotation angle $\theta$ and rotation axis $u$. The rotation is given by
\[R_q(v) =  \overline{q}\cdot v \cdot q.\]
The explicit construction of shapes from mean curvature half-density and conformal structure is called spin transformation.  Suppose given an immersion of a surface $f:M\rightarrow \mathbb{R}^3$ and a quaternion-valued function on the surface $\phi: M\rightarrow \mathbb{H}$, which is understood as a continuously varying rotation at each point. We scale and rotate every tangent plane by
\begin{equation}
\label{eqn:spin_tr}
d \tilde{f} = \overline{\phi}\cdot df\cdot\phi.
\end{equation}
However, there is no guarantee that these rotated tangent planes will again form a surface. For simply connected surface, $d \tilde{f}$ is again the tangent plane of an immersion of surface if and only if it is closed:
\[dd\tilde{f} = 0\]
It turns out to be equivalent to the Dirac equation  \cite{kamberov1998}
\begin{equation}
\label{eqn:dirac_eq}
D_f \phi = \rho \phi,
\end{equation}
where the Dirac operator is defined by
\begin{equation}
\label{eqn:dirac}
D_f \phi = -\frac{df\wedge d\phi}{\lvert df\rvert^2},
\end{equation}
and $\rho:M\rightarrow \mathbb{R}$ is a real-valued function. Therefore, any solution of the equation \eqref{eqn:dirac_eq} will induce a new immersion $\tilde{f}:M\rightarrow \mathbb{R}^3$ by $\tilde{f} = \int_M d\tilde{f}$.
Moreover, the mean curvature $\tilde{H}$ of $\tilde{f}$ is given by
\begin{equation}
\label{eqn:mchd_change}
\tilde{H} \lvert d \tilde{f} \rvert = H \lvert df \rvert + \rho \lvert df\rvert, 
\end{equation}
where $H$ is the mean curvature of the original surface $f$. Observe that, due to the scaling factor $\lvert d \tilde{f} \rvert$ in \eqref{eqn:mchd_change}, one can not fully control the mean curvature $\tilde{H}$. However, by introducing a variant notion, namely the mean curvature half-density: 
\begin{equation}
h := H \lvert df\rvert, 
\end{equation}
the equation \eqref{eqn:mchd_change} turns to
\begin{equation}
\tilde{h} = h + \rho \lvert df\rvert.
\end{equation}  
This means that the mean curvature half-density $\tilde{h}$ can be precisely realized as long as the solution $\phi$ for equation \eqref{eqn:dirac_eq} exists.

Crane et al. \cite{crane_spin_2011} first discretize the equation \eqref{eqn:dirac} and show applications in computer graphics, such as curvature painting. The following works are, e.g.,  Crane et al. \cite{crane_robust_2013} use the spin transformation for surface fairing. Liu et al. \cite{liu2017_a-dirac-operator-for-extrinsic-shape} construct a continuous spectrum of operators between the square of the Dirac operator and the Laplace-Beltrami operator. These operators are utilized to enhance surface matching and segmentation problems. Ye et al.  \cite{Ye_2018} create a framework, which consistently discretized the extrinsic Dirac operator and an intrinsic Dirac operator. In this paper, we improve the reconstruction based on \cite{crane_spin_2011,Ye_2018}  by solving an equation with a larger solution space and introducing an area calibration (see Section \ref{subsec:rec}).

\subsection{Deep Generative Models for 3D shapes}
Various representations of surfaces have been proposed for 3D shape generation, e.g., models based on volumetric representation \cite{3dgan,ogn2017,Smith2017ImprovedAS,Wang2017OCNNOC,Wang_2018},  or point clouds representation \cite{Fan2017APS, Nash_2017, Achlioptas2018LearningRA}. These methods are particularly applicable for the dataset with inconsistent topology. However, without knowing the mesh structure it is hard to capture the fine structure of certain highly complicated surfaces (see Figure \ref{fig:brain1}).

Our model is closer to the following works, which take the mesh structure into account. Ben-Hamu et al.  \shortcite{Ben_Hamu_2018} propose a representation based on multiple charts, which conformally map different parts of shapes to a domain. Since features over each chart are normalized separately, the fine structure will be better preserved than with a single chart. However, while the creation of such charts requires a sparse correspondence, reconstruction of shapes from the charts needs a template shape, which amounts to a dense correspondence. In order to find such correspondence, one has to introduce a time-consuming workflow beforehand. Groueix et al. \shortcite{groueix2018} learns a parameterization of shapes with multiple embedded charts. Hence one does not have to manually create the charts. However, the generated charts do not always perfectly fit with each other, nor do they preserve as much details as the ones in \cite{Ben_Hamu_2018}. Umetani \shortcite{Umetani_2017} develops a depth map representation with a cube as the domain. This representation works well for close-to-convex shapes like cars, but would be difficult to be applied on highly curved and non-convex shapes. Kostrikov et al. \shortcite{Kostrikov_2018_CVPR} use the same Dirac operator as ours. But they merely replaced the Laplace-Beltrami operator in the neural network with the Dirac operator, thus the real power of the Dirac operator, namely its connection to conformal transformation, is not exploited.

\section{Method}
The main pipeline of our model is depicted in Figure \ref{fig:pipeline}. In the sequel, we will explain the detailed methods for encoding shapes with curvature and vertex density in Section \ref{subsec:conformal_param} and \ref{subsec:rep}, building a neural network based on our representation in Section \ref{subsec:cnn} and reconstruction of shapes in Section \ref{subsec:rec_param} and \ref{subsec:rec}.

\paragraph*{Encoding the Conformal Structure}
In discrete case, how to encode a shape in the scheme of the Bonnet problem (Section \ref{subsec:intrinsic})?  While the mean curvature half-density can be represented by a vertex-based or face-based function, it is not straightforward to pack the conformal structure in a form that is suitable for machine learning pipeline. For example, we can recover the shape of a cow from its spherical conformal parameterization ((b) in Figure \ref{fig:conf_map}) by prescribing the function of mean curvature half-density ((c) in Figure \ref{fig:conf_map}). But it is not clear how to represent a spherical mesh that is conformal equivalent to a given shape purely by scalar functions. One might consider the notion of discrete conformal equivalence for triangular meshes by length cross-ratio on edges  (\cite{springborn2008conformal}). But it is unclear how to transfer the length cross-ratio across different meshes. 

\begin{figure}[H]
\centering
\def\svgwidth{1\columnwidth}
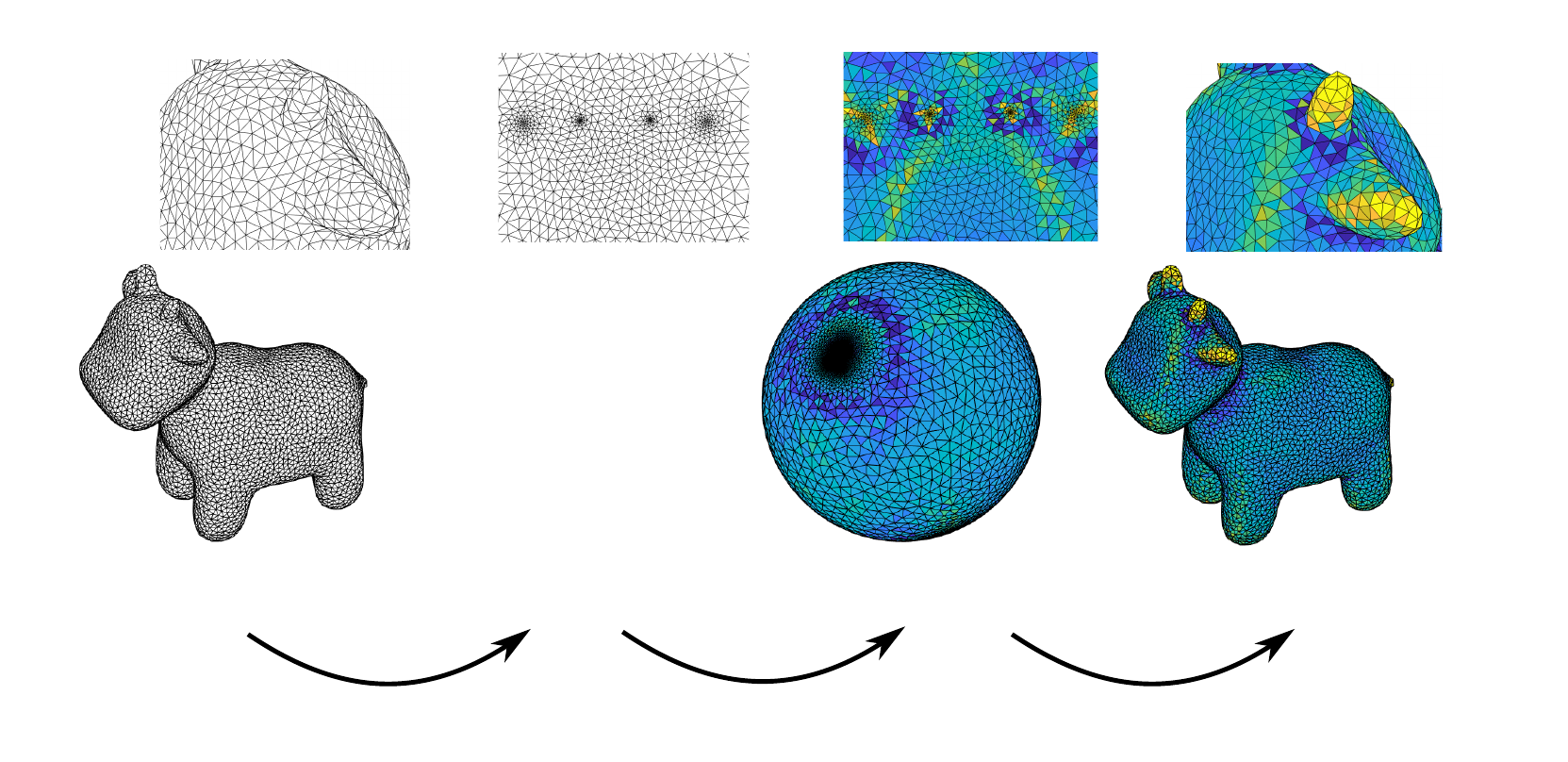
\caption{\cite{Ye_2018} shows that a simply-connected surface in $\mathbb{R}^3$ can be faithfully reconstructed from its conformal parameterization by prescribing the mean curvature half-density.}\label{fig:conf_map}
\end{figure}

Recall that the conformal structure is the set of metrics modulo the equivalence relation $g\sim e^{2u}g$, i.e., two metrics are identified if they only differ by a scaling at each point. Therefore, instead of encoding the conformal structure, we encode the metric of shapes. In general, the space of all metrics still does not have an efficient form of representation, thus we focus on a smaller subset, i.e., the isotropic meshing. Since the conformal map is locally isotropic, i.e., it takes an isotropic mesh to a close-to-isotropic  mesh (see the zoom-in in Figure \ref{fig:conf_map}), and we know that the isotropic meshing is usually generated by the centroidal Voronoi tessellation (CVT) with respect to a density function \cite{alliez2003isotropic}, this density function can be utilized as an approximation of a metric. Therefore, at the beginning of our pipeline all the input shapes are isotropically remeshed (like $(a)$ in Figure \ref{fig:conf_map}). Then, we successively take the following procedures.

\subsection{Conformal parameterization}
\label{subsec:conformal_param}
We map all the shapes to a canonical domain, e.g., the unit disk for disk-like surfaces and the unit sphere for spherical surfaces. The resulting disk-like or spherical meshes are called the conformal parameterization. However, these maps are not unique but differ by a conformal automorphism of the domain. To deal with the ambiguity one may choose from the following approaches depending on the application:

\paragraph*{Landmark alignment} We know that the conformal automorphism of  $S^2$, i.e., the M\"obius transformation, is fully determined by three distinguished points and the conformal automorphism of a disk is determined by one point and one rotation. Hence we choose two landmark points for disk-like surfaces and three landmark points for closed surfaces and align these landmarks via  a conformal mapping. One example is shown in Figure \ref{fig:alignment}.
\paragraph*{Landmark-free alignment} For example, \cite{doi:10.1111/cgf.13503} proposed a canonical M\"obius transformation such that the mass center is aligned with the sphere center. Then, we register two spherical meshes of centered M\"obius transformations by searching for an  optimal rotation.
\paragraph*{Without any alignment at all} This will result in a larger shape latent space and consequently poses higher demands on the capacity of neural network, because, for example, a rotation of shapes might also cause a rotation of curvature function. However, our model is particularly good at capturing this uncertainty (see the discussion in Section \ref{subsec:unaligned}).

Specifically, there are many available algorithms for conformal parameterization for disk-like and spherical surfaces, e.g.,  \cite{Gu_2004,crane_robust_2013, Choi_2015, Choi_20152, Sawhney:2017:BFF, Ye_2018}. In fact, we did not observe significant differences between these algorithms in our experiments.

\subsection{Making the representation}
\label{subsec:rep}
In order to build the neural network, we need some fixed meshes for canonical domains. In particular, we use the standard $256\times 256$ grids for the disk. For the spherical domain, we obtain a spherical mesh by iteratively applying the 1-to-4 subdivision and normalization on an icosahedron. 

Then, we interpolate the following two functions from the conformal parameterization of shapes to the domain with inverse distance weight.

\paragraph*{Mean curvature half-density} The mean curvature half-density $h$ is a face-based function given by \cite{Ye_2018}
\begin{equation}
\label{eqn:dis_mchd}
h_i = \frac{\sum_j \lvert e_{ij}\rvert \tan \theta_{ij}/2}{2 \sqrt{A_i}},
\end{equation}
where the sum runs over all the edges $e_{ij}$ of the face $T_i$, $\theta_{ij}$ are bending angles at the edge $e_{ij}$ and $A_i$ are the face area.

\paragraph*{Vertex density function} We estimate the density function $\mathfrak{d}$ by the reciprocal of vertex area, $\mathfrak{d}_i:= 1 /\tilde{A}_i$, where $\tilde{A}_i$ is the vertex area of the conformal parameterization. We do not normalize the density $\mathfrak{d}$, since the integral of the piecewise constant function $\int_U \mathfrak{d} dA = \sum_i \mathfrak{d}_i \tilde{A}_i = i$ is equal to the number of points located in the area $U$. At the step of  reconstruction, this gives us the information about how many points should be sampled. In the experiment, we observe that the logarithmic density $\tilde{\mathfrak{d}}:= \log \mathfrak{d}$ is more evenly distributed. Therefore, the logarithmic density $\tilde{\mathfrak{d}}$ is instead recorded on the domain. 
 
\subsection{Building CNNs over meshes}
\label{subsec:cnn}
Since the disk-like surface is represented like a 2D image with two channels, any classical CNNs can be directly applied. Hence, we will focus on the case of spherical surfaces.

\paragraph*{Convolution layers} Many CNNs on arbitrary graphs or surfaces have been proposed in recent works, see \cite{Bruna2014SpectralNA,kipf2016semi, Cheb_2016, Monti_2017, doi:10.1111/cgf.12693,MasBosBroVan15,Maron17} and the survey \cite{7974879}. We opt for a mesh-based CNN based on small tangent patches \cite{Tat18}.

\begin{wrapfigure}{R}{0.8in}
	\vspace{-0.5\intextsep}
	\hspace*{-.75\columnsep}
	\includegraphics[width=1in]{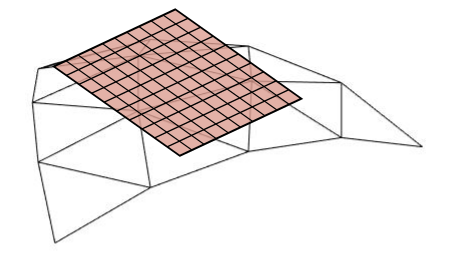}
	\vspace{-0.5\intextsep}
	
\end{wrapfigure}

Each face of the domain is assigned with a tangent plane, identified with $ \mathbb{R}^2$, at the barycenter. Let $l$ be a positive number such that the projection of the triangular face lies entirely in the patch $[-l,l]\times [-l,l]$ on the tangent plane. This projection $\pi$ gives a local coordinate system of the points in the pre-image $\pi^{-1}([-l,l]\times [-l,l]) \subset S^2$. Hence, the functions restricted in this region can be interpolated to some grids on the patch. The distortion caused by the projection is neglectable when the size of the patches is small. We choose a fixed length $l$ such that all the triangular faces on the domain are projected inside the corresponding patches.  The convolution is the ordinary $2D$ convolution within each patch with the shared ﬁlter weights across different patches.

\begin{figure}[h!]
\centering
\def\svgwidth{0.8\columnwidth}
\begingroup%
  \makeatletter%
  \providecommand\color[2][]{%
    \errmessage{(Inkscape) Color is used for the text in Inkscape, but the package 'color.sty' is not loaded}%
    \renewcommand\color[2][]{}%
  }%
  \providecommand\transparent[1]{%
    \errmessage{(Inkscape) Transparency is used (non-zero) for the text in Inkscape, but the package 'transparent.sty' is not loaded}%
    \renewcommand\transparent[1]{}%
  }%
  \providecommand\rotatebox[2]{#2}%
  \newcommand*\fsize{\dimexpr\f@size pt\relax}%
  \newcommand*\lineheight[1]{\fontsize{\fsize}{#1\fsize}\selectfont}%
  \ifx\svgwidth\undefined%
    \setlength{\unitlength}{1525.76697642bp}%
    \ifx\svgscale\undefined%
      \relax%
    \else%
      \setlength{\unitlength}{\unitlength * \real{\svgscale}}%
    \fi%
  \else%
    \setlength{\unitlength}{\svgwidth}%
  \fi%
  \global\let\svgwidth\undefined%
  \global\let\svgscale\undefined%
  \makeatother%
  \begin{picture}(1,0.2082927)%
    \lineheight{1}%
    \setlength\tabcolsep{0pt}%
    \put(0,0){\includegraphics[width=\unitlength,page=1]{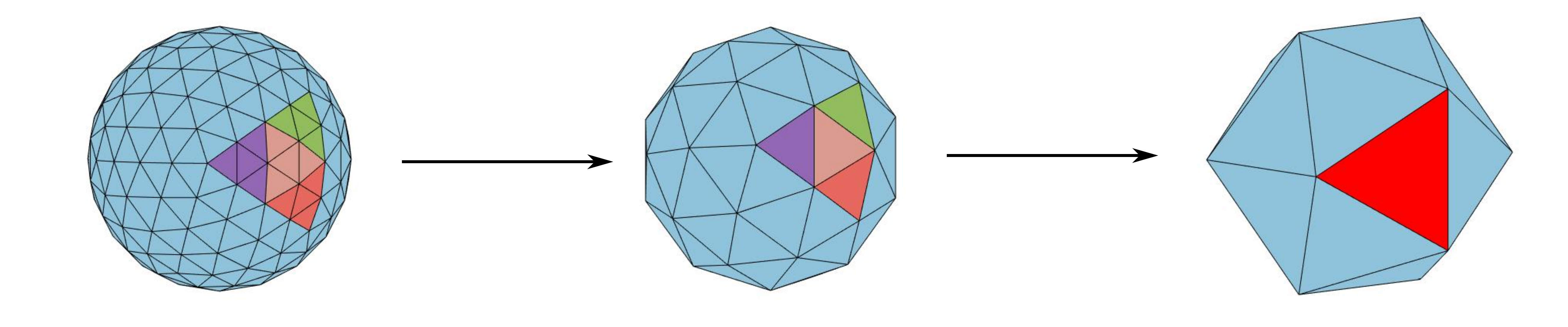}}%
    \put(0.23771605,0.11762664){\color[rgb]{0,0,0}\makebox(0,0)[lt]{\lineheight{1.25}\smash{\begin{tabular}[t]{l}D.S. 1\end{tabular}}}}%
    \put(0.58773118,0.12061589){\color[rgb]{0,0,0}\makebox(0,0)[lt]{\lineheight{1.25}\smash{\begin{tabular}[t]{l}D.S. 2\end{tabular}}}}%
  \end{picture}%
\endgroup%

\caption{Downsampling layers based on the subdivision structure of the spherical meshes. The tensors in the previous layer, which are corresponding to a common triangle in the next layer, are merged to the tensor  associated with the father triangle. These downsampling layers respect the spatial relations among the triangles.}\label{fig:downsampling} 
\end{figure}

\paragraph*{Downsampling and Upsampling layers}
Like the MaxPooling and UpPooling layers for classical CNNs, we need the same sort of operations for mesh domain to decrease and increase the spatial dimension of neural network. One can first apply the ordinary $2D$ pooling layers within each patch. Furthermore, since our spherical domain is constructed by subdividing an icosahedron, it is naturally endowed with a hierarchical structure (Figure \ref{fig:downsampling}), which gives rise to downsampling and upsampling layers between spherical meshes with different refinements.

The detailed architectures of our convolutional neural networks are depicted in the appendix.
 
\subsection{Reconstruction of Conformal Parameterization}
\label{subsec:rec_param}
In order to construct a conformal parameterization  from a given vertex density function $\mathfrak{d}$, we first randomly sample $n_i$ points in every faces of the domain, where $n_i = \mathfrak{d}_i \tilde{A}_i$ and $\tilde{A}_i$ is the face area. Next, an isotopic meshing is constructed as follows.

\paragraph*{Centroidal Voronoi Tessellation}

The isotropic meshing is usually made by centroidal Voronoi tessellation \cite{du1999centroidal}. Given a set of points $\{v_i\}$ in a metric space, particularly $\mathbb{R}^2$ or $S^2$. The Voronoi region $V_i$ corresponding to $v_i$ is defined by
\begin{equation}
\label{eqn:voronoi}
V_i = \{x| \lvert x-v_i\rvert \leq \lvert x-v_j\rvert, \, j\neq i\},
\end{equation}
which are polygons (see Appendix \ref{app:CVT} for the formula for computing the weighted centroid of polygons). Given a density function $\mathfrak{d}$, the centroid $v^*_i$ of the polygon $V_i$ is given by
\begin{equation}
\label{eqn:centroid}
 v^* = \frac{\int_ V y \mathfrak{d}(y) dy}{\int_V  \mathfrak{d}(y) dy}.
\end{equation}

We call a point set $\{v_i\}$ the weighted centroidal Voronoi tesselation if $v_i = v_i^*$ holds true for all $i$.

\begin{figure}[h]
\centering
\def\svgwidth{1\columnwidth}
\begingroup%
  \makeatletter%
  \providecommand\color[2][]{%
    \errmessage{(Inkscape) Color is used for the text in Inkscape, but the package 'color.sty' is not loaded}%
    \renewcommand\color[2][]{}%
  }%
  \providecommand\transparent[1]{%
    \errmessage{(Inkscape) Transparency is used (non-zero) for the text in Inkscape, but the package 'transparent.sty' is not loaded}%
    \renewcommand\transparent[1]{}%
  }%
  \providecommand\rotatebox[2]{#2}%
  \newcommand*\fsize{\dimexpr\f@size pt\relax}%
  \newcommand*\lineheight[1]{\fontsize{\fsize}{#1\fsize}\selectfont}%
  \ifx\svgwidth\undefined%
    \setlength{\unitlength}{495.47980842bp}%
    \ifx\svgscale\undefined%
      \relax%
    \else%
      \setlength{\unitlength}{\unitlength * \real{\svgscale}}%
    \fi%
  \else%
    \setlength{\unitlength}{\svgwidth}%
  \fi%
  \global\let\svgwidth\undefined%
  \global\let\svgscale\undefined%
  \makeatother%
  \begin{picture}(1,0.22018347)%
    \lineheight{1}%
    \setlength\tabcolsep{0pt}%
    \put(0,0){\includegraphics[width=\unitlength,page=1]{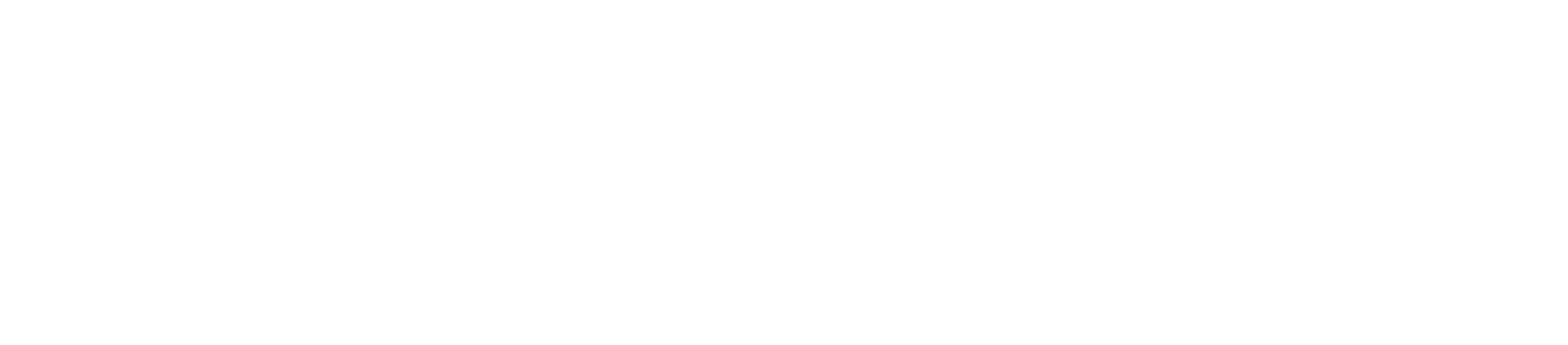}}%
    \put(0.02348696,0.02074988){\color[rgb]{0,0,0}\makebox(0,0)[lt]{\lineheight{1.25}\smash{\begin{tabular}[t]{l}Sampling\end{tabular}}}}%
    \put(0.20223567,0.01853109){\color[rgb]{0,0,0}\makebox(0,0)[lt]{\lineheight{1.25}\smash{\begin{tabular}[t]{l}Voronoi D.\end{tabular}}}}%
    \put(0.39202272,0.01853192){\color[rgb]{0,0,0}\makebox(0,0)[lt]{\lineheight{1.25}\smash{\begin{tabular}[t]{l}1-st iter.\end{tabular}}}}%
    \put(0.5743223,0.01853192){\color[rgb]{0,0,0}\makebox(0,0)[lt]{\lineheight{1.25}\smash{\begin{tabular}[t]{l}5-th iter.\end{tabular}}}}%
    \put(0.71397561,0.02074988){\color[rgb]{0,0,0}\makebox(0,0)[lt]{\lineheight{1.25}\smash{\begin{tabular}[t]{l}Delaunay Tri.\end{tabular}}}}%
    \put(0,0){\includegraphics[width=\unitlength,page=2]{cvt.pdf}}%
  \end{picture}%
\endgroup%

\caption{Centroidal Voronoi Tessellation. In order to obtain an isotropic meshing with respect to a given density, we first sample a point set according to the density and repeatedly apply the Lloyd's relaxation. Observe that the point set becomes more and more isotropic as the iteration goes.}
\label{fig:cvt}
\end{figure}

In this paper we use Lloyd relaxation to compute the CVT. Given a point set $\{v_i\}$ we iteratively update the point $v_i$ with the corresponding centroid $v_i^*$ until it converges 
(see Figure \ref{fig:cvt}):
\begin{enumerate}
\item Randomly sample the points with respect to the density $\mathfrak{d}$ (defined in Section \ref{subsec:rep}).
\item Create the Voronoi diagram. For the disk case, we have to be a bit careful that the Voronoi cells close to the boundary are mostly unbounded. Hence we reflect the points close to the boundary, so that all the Voronoi cells inside or close to the unit disk are bounded. 
\item Compute the weighted centroids of the (bounded) Voronoi cells and, for the disk case, remove the points lying outside the disk (see Figure \ref{fig:ccvt}). 
\end{enumerate}

\begin{figure}[h]
\centering
\def\svgwidth{1\columnwidth}
\begingroup%
  \makeatletter%
  \providecommand\color[2][]{%
    \errmessage{(Inkscape) Color is used for the text in Inkscape, but the package 'color.sty' is not loaded}%
    \renewcommand\color[2][]{}%
  }%
  \providecommand\transparent[1]{%
    \errmessage{(Inkscape) Transparency is used (non-zero) for the text in Inkscape, but the package 'transparent.sty' is not loaded}%
    \renewcommand\transparent[1]{}%
  }%
  \providecommand\rotatebox[2]{#2}%
  \newcommand*\fsize{\dimexpr\f@size pt\relax}%
  \newcommand*\lineheight[1]{\fontsize{\fsize}{#1\fsize}\selectfont}%
  \ifx\svgwidth\undefined%
    \setlength{\unitlength}{544.72472664bp}%
    \ifx\svgscale\undefined%
      \relax%
    \else%
      \setlength{\unitlength}{\unitlength * \real{\svgscale}}%
    \fi%
  \else%
    \setlength{\unitlength}{\svgwidth}%
  \fi%
  \global\let\svgwidth\undefined%
  \global\let\svgscale\undefined%
  \makeatother%
  \begin{picture}(1,0.27955494)%
    \lineheight{1}%
    \setlength\tabcolsep{0pt}%
    \put(0,0){\includegraphics[width=\unitlength,page=1]{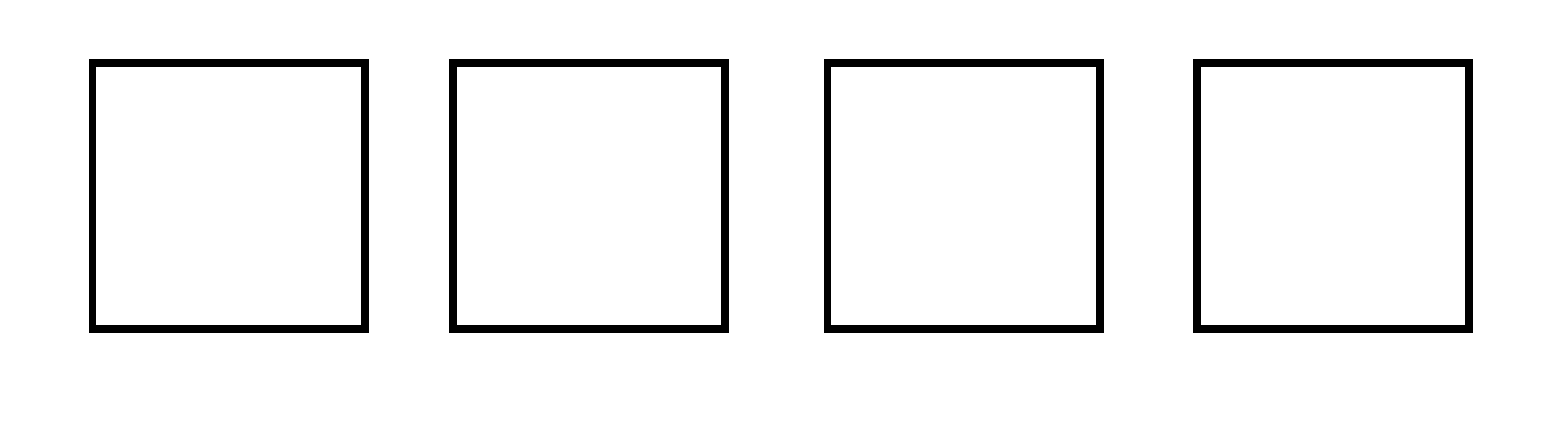}}%
    \put(0.03705011,0.03352662){\color[rgb]{0,0,0}\makebox(0,0)[lt]{\lineheight{1.25}\smash{\begin{tabular}[t]{l}Voronoi D.\end{tabular}}}}%
    \put(0.29176598,0.03702372){\color[rgb]{0,0,0}\makebox(0,0)[lt]{\lineheight{1.25}\smash{\begin{tabular}[t]{l}Flipping\end{tabular}}}}%
    \put(0.53287402,0.03702227){\color[rgb]{0,0,0}\makebox(0,0)[lt]{\lineheight{1.25}\smash{\begin{tabular}[t]{l}Update\end{tabular}}}}%
    \put(0.77354164,0.03352662){\color[rgb]{0,0,0}\makebox(0,0)[lt]{\lineheight{1.25}\smash{\begin{tabular}[t]{l}Remove\end{tabular}}}}%
    \put(0,0){\includegraphics[width=\unitlength,page=2]{figs/ccvt/ccvt.pdf}}%
  \end{picture}%
\endgroup%

\caption{Constraint CVT. To avoid dealing with unbounded Voronoi cells, we flip the points, which are close to the boundary, such that the cells close to the boundary are all bounded.}\label{fig:ccvt}
\end{figure}

Then, a Delaunay triangulation is constructed by taking the dual of the Voronoi diagram. Generally, this triangulation does not perfectly fit the disk at the boundary, but it does not significantly affect the global appearance of shapes.

\subsection{Surface Reconstruction}
\label{subsec:rec}
Now, we are ready to reconstruct the surface from a conformal parameterization with prescribed mean curvature half-density. In the following we first demonstrate an improved reconstruction method which is a slight modification of \cite{Ye_2018} and then introduce a new procedure of area calibration, which would be particularly effective when the area scaling is not accurately restored by the previous method. 

\paragraph*{Dirac Energy}
In practice, the exact solution of the Dirac equation \eqref{eqn:dirac_eq} can hardly be obtained, so we actually search for the solution $\phi:M\rightarrow \mathbb{H}$ such that:
\[(D_f -\rho - \sigma) \phi = 0\]
for a very small real number $\sigma$, which actually amounts to the eigenvalue problem 
\[(D_f -\rho) \phi = \lambda \phi,\]
where $\lambda$ is the eigenvalue with the smallest magnitude \cite{crane_spin_2011}. 

In discrete case, $D_f -\rho$ is a $\lvert F\rvert \times \lvert F\rvert$ quaternion-valued matrix \cite{crane_spin_2011}, or in practice, a $4\lvert F\rvert \times 4\lvert F\rvert$ real-valued matrix such that any quaternion $q = a + b i +c j +d k$ is represented by a $4\times 4$ real-valued block matrix:
\[\begin{pmatrix} a & -b & -c & d \\ b & a & -d & c \\ c & d & a & -b \\ d & -c & b & a\end{pmatrix}.\]
 
We briefly introduce the discretization of the matrix $D_f -\rho$ and refer the reader to \cite{crane_spin_2011, Ye_2018} for more details. Let $e_{ij} \in \mathrm{Im}(\mathbb{H})$ be the oriented edge embedded in the quaternion space and $\mathbf{H}_{ij} := \frac{1}{2} \lvert e_{ij} \rvert \tan \frac{\theta_{ij}}{2}$ be the integrated mean curvature at the edge $e_{ij}$, where $\theta_{ij}$ is the bending angle between the face $i$ and $j$. The matrix of the Dirac operator is a $4 \lvert F\rvert \times 4\lvert F\rvert$ matrix $D_f$ given by (\cite{Ye_2018})
\[(D_f \phi)_i = \frac{1}{2}E_{ij}\cdot \phi_j - \mathbf{H}_i \phi_i, \]
where $E_{ij} : = 2\mathbf{H}_{ij} + e_{ij}$ and $\mathbf{H}_i = \sum_j \mathbf{H}_{ij}$. The discrete form of $\rho$ is a $4\lvert F\rvert \times 4 \lvert F\rvert$ diagonal matrix $P$ with the discrete mean curvature half-density \eqref{eqn:dis_mchd} as the diagonal. Instead of building the target shape in one step, we slowly flow the initial shape to the target for the purpose of stability. Hence, we build the matrix $\hat{D}(t) = D_f - tP$, where $t \in [0,1]$ is a step length parameter.

We observe that, even though this face-based Dirac operator gives the exact solution, it is not numerically stable, because its solution space is often too large (technically, some solutions that give the edge-constraint normals far from the actual face normal will result in unwanted transformations). On the other hand, while the vertex-based operators in \cite{crane_spin_2011, Ye_2018} works well in many cases, they are not able to faithfully recover the high curvature regions on the surface, because their solution spaces are too limited. To have a balance between these two approaches we propose the following regularized energy based on the face-based operator:

\[E_D(t) =  \hat{D}^T(t) \cdot \hat{D}(t) + cR,\]
where $c$ is a positive coefficient and $R$ is the $4\lvert F\rvert \times 4\lvert F\rvert$ regularization matrix such that
\[R = \sum_{ij}\lvert e_{ij}^*\rvert (\phi_i-\phi_j)^2,\]
where the sum runs over all adjacent faces $i$ and $j$. Note that the weights with the dual edge length are used in \cite{Chern:2018:SFM}. To have  finer control of the regularizer, one can decompose $R$ into four components and set different weights as in \cite{Chern:2018:SFM}, but we did not see that this will make any obvious difference in our setting. Empirically, the coefficient $c$ is set to be $0.001\max\limits_{ij}\lvert e_{ij}\lvert$.

By the min-max principle, solving the generalized eigenvalue problem 
\[E_D(t) \phi = \lambda M \phi,\]
where $M$ is the mass matrix, is actually equivalent to minimizing the energy
\[ \min E_D, \text{ s.t. } \lvert \phi \rvert = 1,\]
with the metric defined by $\lvert \phi \rvert^2 := \phi^T \cdot M \cdot \phi$.

Finally, the edges are constructed by the spin transformation
\[e_{ij} \mapsto \mathrm{Im}( \overline{\phi_i} \cdot E_{ij} \cdot \phi_j ),\]
the position of vertices $v_i$ are recovered by solving the Poisson equation (see Section 3 of \cite{sorkine2007rigid} or Section 5.6 of \cite{crane_spin_2011}). In the attached videos, we prescribe the mean curvature half-density of two shapes (red) on their conformal parameterization (blue) and it shows deformation from the sphere to the original shapes.

\paragraph*{Area calibration}
Even though the Dirac operator with regularization term improves the accuracy of reconstruction, we observe that some area distortion is still visible, especially at the region with really high curvature. To overcome this problem, we make the reconstruction algorithm be aware of the area scaling factor. Chern et al. \cite{chern2015_close-to-conformal-deformations-of-volumes} prescribe a volumetric scaling factor $e^u$ and obtains the close-to-conformal volumetric  deformation by minimizing an energy $E_u$ depending on $u$. While the energy $E_u$ in \cite{chern2015_close-to-conformal-deformations-of-volumes} is specifically designed for 3D volumetric meshes, an analogy for 2D surfaces still holds in smooth case:

\begin{theorem}
\label{thm:closing}
Let $f:M \rightarrow \mathbb{R}^3 \subset \mathbb{H}$ be an isometric immersion and $h: M\rightarrow \mathbb{R}$ be any function. The quaternion gradient is defined by
\[\grad_f h = df(\grad h).\]
The spin transformation $d\tilde{f} := \overline{\phi}\cdot df \cdot \phi$ with $D_f \phi =0$ is closing if
\begin{equation}
\label{eqn:closing_area}
d\phi \phi^{-1}= -\frac{1}{2} G df,
\end{equation}
where $G:= \grad_f u$ is the gradient of the logarithmic factor $e^u:= \lvert \phi \rvert^2$.
\end{theorem}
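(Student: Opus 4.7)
The plan is to verify $d(d\tilde f)=0$ directly. Applying the Leibniz rule for $\mathbb{H}$-valued differential forms (and using $d(df)=0$) gives
\[
d(d\tilde f)\;=\;d\overline\phi\wedge df\cdot\phi\;-\;\overline\phi\cdot df\wedge d\phi.
\]
By the definition of the Dirac operator in \eqref{eqn:dirac}, the hypothesis $D_f\phi=0$ reads $df\wedge d\phi=0$, which kills the second term. So it suffices to show that $d\overline\phi\wedge df=0$ under the additional hypothesis \eqref{eqn:closing_area}.

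First I would conjugate $d\phi=-\tfrac12 G\,df\,\phi$. Both $df$ and $G=df(\grad u)$ are pure-imaginary (so $\overline{df}=-df$, $\overline G=-G$), which gives $d\overline\phi=-\tfrac12\,\overline\phi\,df\,G$. At a point $p$ with $\grad u(p)\neq 0$, choose the orthonormal frame $e_1=\grad u/|\grad u|$, $e_2=Je_1$, and set $g_i:=df(e_i)$. Isometry of $f$ makes $g_1,g_2\in\mathrm{Im}(\mathbb{H})$ orthonormal, so $g_i^{\,2}=-1$, $g_1g_2=-g_2g_1=:N$ is the unit normal, and $G=|\grad u|\,g_1$. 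Evaluating the wedge on the frame,
\[
(d\overline\phi\wedge df)(e_1,e_2)=\tfrac{|\grad u|}{2}\,\overline\phi\,(g_2-Ng_1),
\]
and the identity $Ng_1=g_1g_2g_1=-g_1^{\,2}g_2=g_2$ (from $g_2g_1=-g_1g_2$ and $g_1^{\,2}=-1$) collapses the bracket to zero. Points with $\grad u(p)=0$ are trivial, since \eqref{eqn:closing_area} there forces $d\phi=0$.

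As a consistency check, the prescribed scaling $|\phi|^2=e^u$ should itself be compatible with \eqref{eqn:closing_area}. A short calculation of $d|\phi|^2=d\phi\,\overline\phi+\phi\,d\overline\phi=-\tfrac12|\phi|^2(G\,df+df\,G)$, together with $ab+ba=-2\langle a,b\rangle_{\mathbb{R}^3}$ for pure-imaginary quaternions and $\langle G,df(X)\rangle=du(X)$ (isometry of $f$), yields $d|\phi|^2=|\phi|^2\,du$, hence $|\phi|^2\propto e^u$. The main technical difficulty is the bookkeeping of non-commutative quaternion factors through the wedge products and conjugations; the frame aligned with $\grad u$ is the key choice that makes the identity $g_1g_2g_1=g_2$ visibly collapse the remaining term to zero.
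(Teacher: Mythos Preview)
You have misparsed the statement: $D_f\phi=0$ is not an additional hypothesis but the \emph{conclusion} the paper derives from \eqref{eqn:closing_area}. Under your reading the theorem collapses to a triviality, because $D_f\phi=0$ alone already forces $d(d\tilde f)=0$: once $df\wedge d\phi=0$, quaternionic conjugation (using $\overline{df}=-df$) gives $d\bar\phi\wedge df=\overline{df\wedge d\phi}=0$, so both summands in your expression for $d(d\tilde f)$ vanish without ever invoking \eqref{eqn:closing_area}.

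That said, the substance of the intended proof is already contained in your argument. Your frame computation establishes $d\bar\phi\wedge df=0$ from \eqref{eqn:closing_area} \emph{alone}; conjugating that identity yields $df\wedge d\phi=0$, i.e.\ $D_f\phi=0$. So the fix is simply to drop the assumption $D_f\phi=0$ and recognize that your second paragraph proves it. The paper reaches the same vanishing by working in a conformal coordinate $(x,y)$: it reads off $\phi_x\phi^{-1}$ and $\phi_y\phi^{-1}$ from \eqref{eqn:closing_area} and substitutes into the local form $D_f\phi=f_x\phi_y-f_y\phi_x$, collapsing it to zero via the quaternion relations among $f_x$, $f_y$ and the normal $n$. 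Your orthonormal frame aligned with $\grad u$ is a slightly more geometric packaging of the same cancellation (your identity $g_1g_2g_1=g_2$ plays the role of the paper's $f_x n=-f_y$), and is arguably cleaner since it avoids carrying the conformal factor through the computation.
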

\begin{proof}
See Appendix \ref{app:closing}.
\end{proof}

Therefore, given a spin transformation induced from $\phi$ with the area factor $u = \log \lvert \phi\rvert$, the quaternion-valued $1$-form 
\[\omega := d\phi + \frac{1}{2}Gdf\phi \]
vanishes. In practice, we minimize the energy $E_u :=\lvert \omega\rvert^2$, where the metric for quaternion-valued $1$-form is defined by
\begin{equation}
\label{eqn:quat_oneform_metric}
\langle \omega ,\eta \rangle := \int_M  \overline{\omega} \wedge (* \eta).
\end{equation}

In discrete case, minimizing the energy $E_u$ again amounts to solving a generalized eigenvalue problem for a $4\lvert F\rvert \times 4\lvert F\rvert$ matrix (see Section \ref{subsec:area_energy}). To avoid introducing the scaling factor as one more function in our representation and subsequently increasing the data size, we first apply the isotropic remeshing with approximate equalized face area \cite{fuhrmann2010direct} for all shapes. In this case the logarithmic factor $u$ should be set to $u_i = \log(1/\sqrt{\lvert \tilde{A}_i\rvert})$, where $\tilde{A}_i$ is the face area of the conformal parameterization. 

\begin{figure}[ht!]
\centering
\def\svgwidth{0.9\columnwidth}
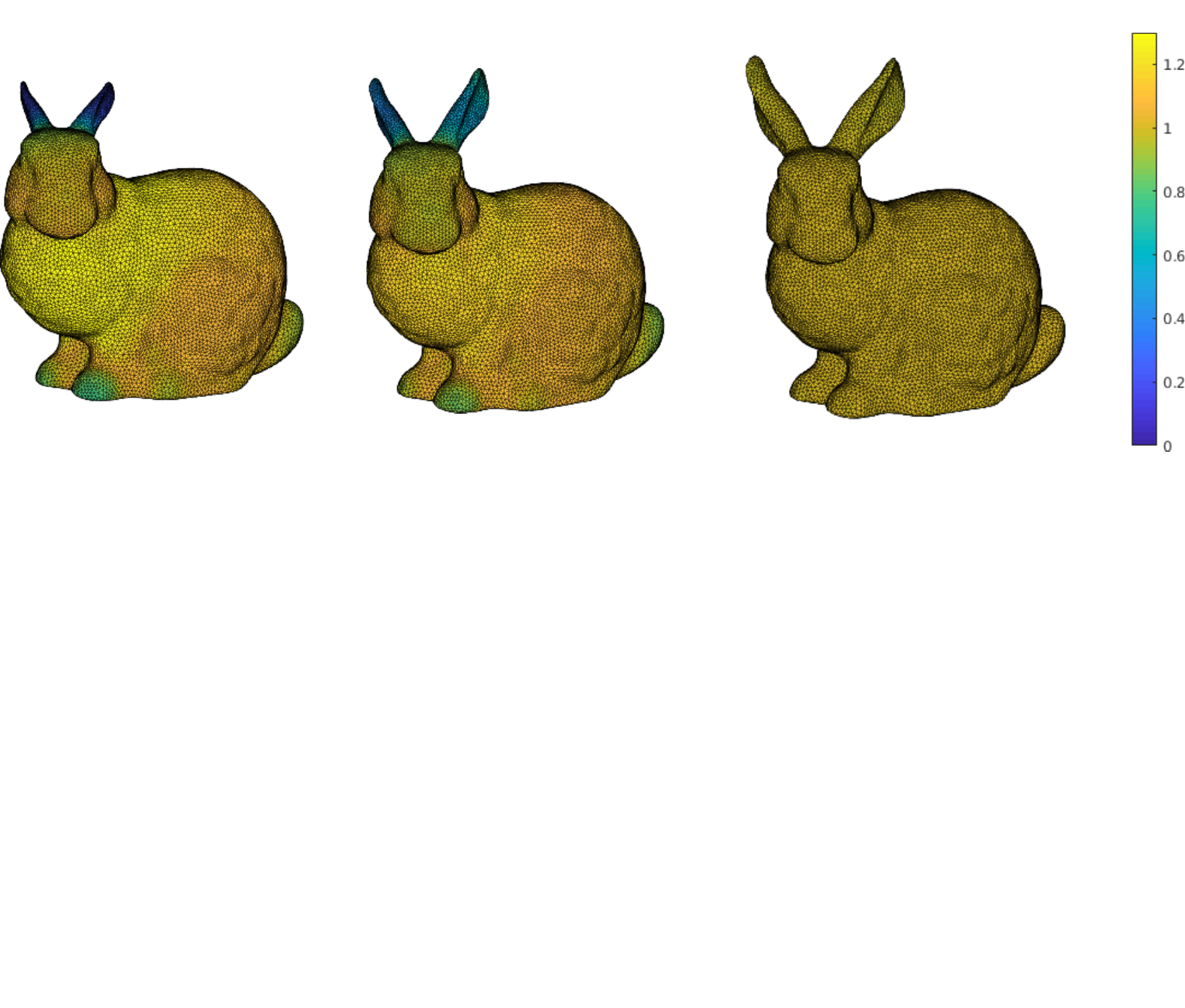
\label{fig:area_correction}\caption{Reconstruction of shapes from their conformal parameterization. While the Willmore energy is defined by $W = \sum_i h_i^2$, we define the relative Willmore energy between two meshes with identical connectivity by $r.W :=\sum_i ((h_1)_i-(h_2)_i)^2$, which measures how close the mean curvature half-density of two meshes are. This experiment shows that our method substantially improves the accuracy of curvature reconstruction. Furthermore, the area distortion, which usually appears in the regions with high curvature, gets much reduced by the area calibration. Note that, in contrast to \cite{chern2015_close-to-conformal-deformations-of-volumes}, we only encode the expected scaling factor in the energy $\lvert \omega\rvert^2$ and the factual scaling factor $\lvert \phi\rvert^4$ is determined by the optimizer. }

\end{figure}

In summary, we first minimize the energy $E_D$ with a small step length several times until the mean curvature half-density converges to the prescribed one. Then, we minimize the energy $E_u$ once to get the correct area scaling factor.

\begin{figure}[h!]
\centering
\def\svgwidth{1\columnwidth}
\begingroup%
  \makeatletter%
  \providecommand\color[2][]{%
    \errmessage{(Inkscape) Color is used for the text in Inkscape, but the package 'color.sty' is not loaded}%
    \renewcommand\color[2][]{}%
  }%
  \providecommand\transparent[1]{%
    \errmessage{(Inkscape) Transparency is used (non-zero) for the text in Inkscape, but the package 'transparent.sty' is not loaded}%
    \renewcommand\transparent[1]{}%
  }%
  \providecommand\rotatebox[2]{#2}%
  \newcommand*\fsize{\dimexpr\f@size pt\relax}%
  \newcommand*\lineheight[1]{\fontsize{\fsize}{#1\fsize}\selectfont}%
  \ifx\svgwidth\undefined%
    \setlength{\unitlength}{675bp}%
    \ifx\svgscale\undefined%
      \relax%
    \else%
      \setlength{\unitlength}{\unitlength * \real{\svgscale}}%
    \fi%
  \else%
    \setlength{\unitlength}{\svgwidth}%
  \fi%
  \global\let\svgwidth\undefined%
  \global\let\svgscale\undefined%
  \makeatother%
  \begin{picture}(1,0.25714289)%
    \lineheight{1}%
    \setlength\tabcolsep{0pt}%
    \put(0,0){\includegraphics[width=\unitlength,page=1]{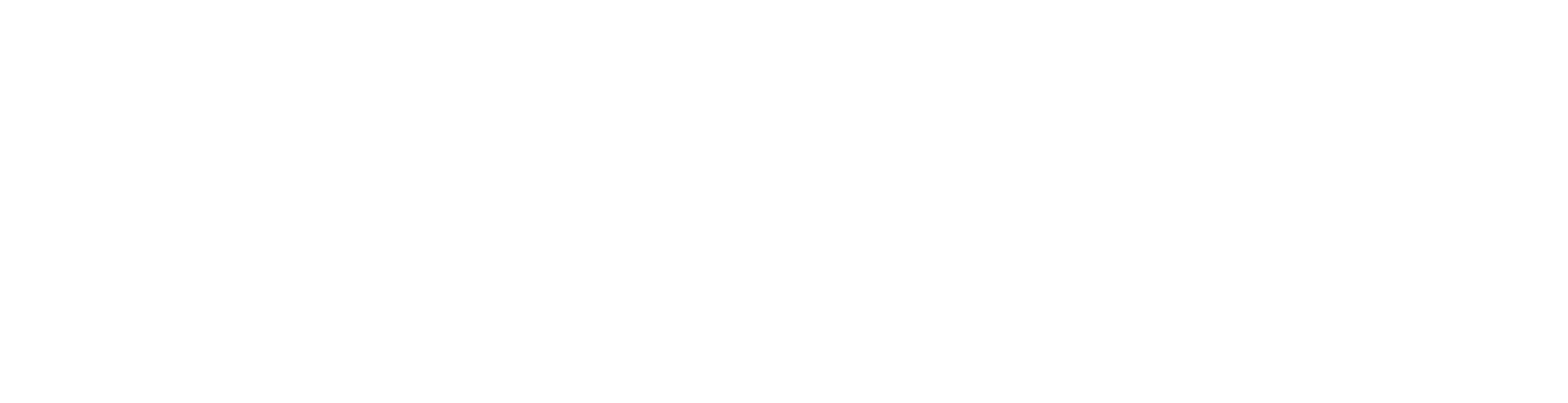}}%
    \put(0.30906853,0.00346891){\color[rgb]{0,0,0}\makebox(0,0)[lt]{\lineheight{1.25}\smash{\begin{tabular}[t]{l}$\lvert V\rvert=3738$\end{tabular}}}}%
    \put(0.06065587,0.00267526){\color[rgb]{0,0,0}\makebox(0,0)[lt]{\lineheight{1.25}\smash{\begin{tabular}[t]{l}$\lvert V\rvert=2442$\end{tabular}}}}%
    \put(0.52335428,0.00267526){\color[rgb]{0,0,0}\makebox(0,0)[lt]{\lineheight{1.25}\smash{\begin{tabular}[t]{l}$\lvert V\rvert=5018$\end{tabular}}}}%
    \put(0.75430711,0.00346891){\color[rgb]{0,0,0}\makebox(0,0)[lt]{\lineheight{1.25}\smash{\begin{tabular}[t]{l}$\lvert V\rvert=10001$\end{tabular}}}}%
    \put(0,0){\includegraphics[width=\unitlength,page=2]{figs/remeshing/remeshing.pdf}}%
  \end{picture}%
\endgroup%

\caption{Remeshing. Given an original shape with $\lvert V\rvert = 5000$, the density is modified by multiplying with $0.25$, $0.75$, $1$ and $2$. The mean curvature half-density changes accordingly such that the mean curvature is preserved. }\label{fig:remeshing}
\end{figure}

\begin{figure*}
\centering
\def\svgwidth{0.95\textwidth}
\import{figs/brain_comparison/}{brain_comparison2.pdf_tex}
\caption{The randomly generated cortical surfaces by Multi-chart GAN \cite{Ben_Hamu_2018}
and the VAE based on our representation. Our representation has dimension $320 \times 32\times 32\times 2 = 655360$, which has the same magnitude as the data size of Multi-chart, i.e., $16\times 64\times 64\times 3 = 196608$. However, we only require 3 landmark points for alignment, while the Multi-chart needs a dense correspondence for surface reconstruction. The surfaces are labeled by the mean curvature half-density. Note that, the training data mostly have the Willmore energy from 900 to 1000. Although the generated surfaces from our model have been smoothed to a certain extent (partly due to a well-known limitation of VAE), our model apparently preserves more fine structures than the position-based model.}\label{fig:brain2}
\end{figure*}

\section{Results}

We use the Matlab package gptoolbox \cite{gptoolbox} for data pre-processing and Tensorflow \cite{tensorflow2015-whitepaper} for building the neural networks on meshes. All the neural networks are trained and evaluated with the GPU GeForce GTX 1080 with 8GB memory.

\subsection{Preliminary applications}

\label{subsec:pre}
We first present some simple applications that are unrelated to machine learning. 

In smooth case, the mean curvature half-density changes covariantly $h\mapsto m\cdot h$ under the parameterization scaling $x\mapsto m\cdot x$, $m\in \mathbb{R}$. Analogously, in discrete case, one can adjust the parameterization by scaling the vertex density, i.e.,  multiplies the density $\mathfrak{d}$ with a constant number, $\mathfrak{d} \mapsto m \mathfrak{d}$. In order to preserve the shape, one has to adjust the mean curvature half-density by $h\mapsto \frac{h}{\sqrt{m}}$. The shapes reconstructed from the modified representation are actually remeshings with approximately $m\lvert V\rvert$ vertices, where $\lvert V\rvert$ is the number of vertices of the original mesh. Figure \ref{fig:remeshing} shows that our method will preserve the smooth features on the shape. However, the regions of high curvature tend to be smoothed with declining vertex number.

\paragraph*{Shape interpolation}
We visualize the interpolation of our curvature-based representation. Figure \ref{fig:intp} shows the shapes reconstructed from a linear interpolation of two animals, whose conformal parameterizations are matched by a M\"obius transformation that aligns 3 chosen landmark points. In addition, one can interpolate the latent space representation of a trained autoencoder (see Section \ref{subsec:closed_surf_gen}). Figure \ref{fig:car_intp} shows two latent space bi-linear interpolations of cars. 

\begin{figure}[h!]
\centering
\includegraphics[scale=0.7]{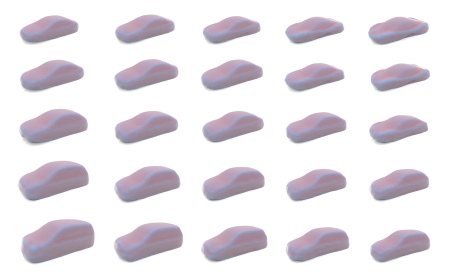}
\caption{Latent space interpolation. We choose four examples in the car dataset and interpolate their mean values in the latent space of VAE. The left lower triangle is a bilinear interpolation of a van, a car and an SUV. The right upper triangle is a bilinear interpolation of a van, a car and a race car.}\label{fig:car_intp}
\end{figure}

\paragraph*{Random generation of disk-like and spherical shapes}
\begin{figure}[h!]
\centering
\def\svgwidth{1\columnwidth}
\begingroup%
  \makeatletter%
  \providecommand\color[2][]{%
    \errmessage{(Inkscape) Color is used for the text in Inkscape, but the package 'color.sty' is not loaded}%
    \renewcommand\color[2][]{}%
  }%
  \providecommand\transparent[1]{%
    \errmessage{(Inkscape) Transparency is used (non-zero) for the text in Inkscape, but the package 'transparent.sty' is not loaded}%
    \renewcommand\transparent[1]{}%
  }%
  \providecommand\rotatebox[2]{#2}%
  \newcommand*\fsize{\dimexpr\f@size pt\relax}%
  \newcommand*\lineheight[1]{\fontsize{\fsize}{#1\fsize}\selectfont}%
  \ifx\svgwidth\undefined%
    \setlength{\unitlength}{571.35327737bp}%
    \ifx\svgscale\undefined%
      \relax%
    \else%
      \setlength{\unitlength}{\unitlength * \real{\svgscale}}%
    \fi%
  \else%
    \setlength{\unitlength}{\svgwidth}%
  \fi%
  \global\let\svgwidth\undefined%
  \global\let\svgscale\undefined%
  \makeatother%
  \begin{picture}(1,0.38645621)%
    \lineheight{1}%
    \setlength\tabcolsep{0pt}%
    \put(0,0){\includegraphics[width=\unitlength,page=1]{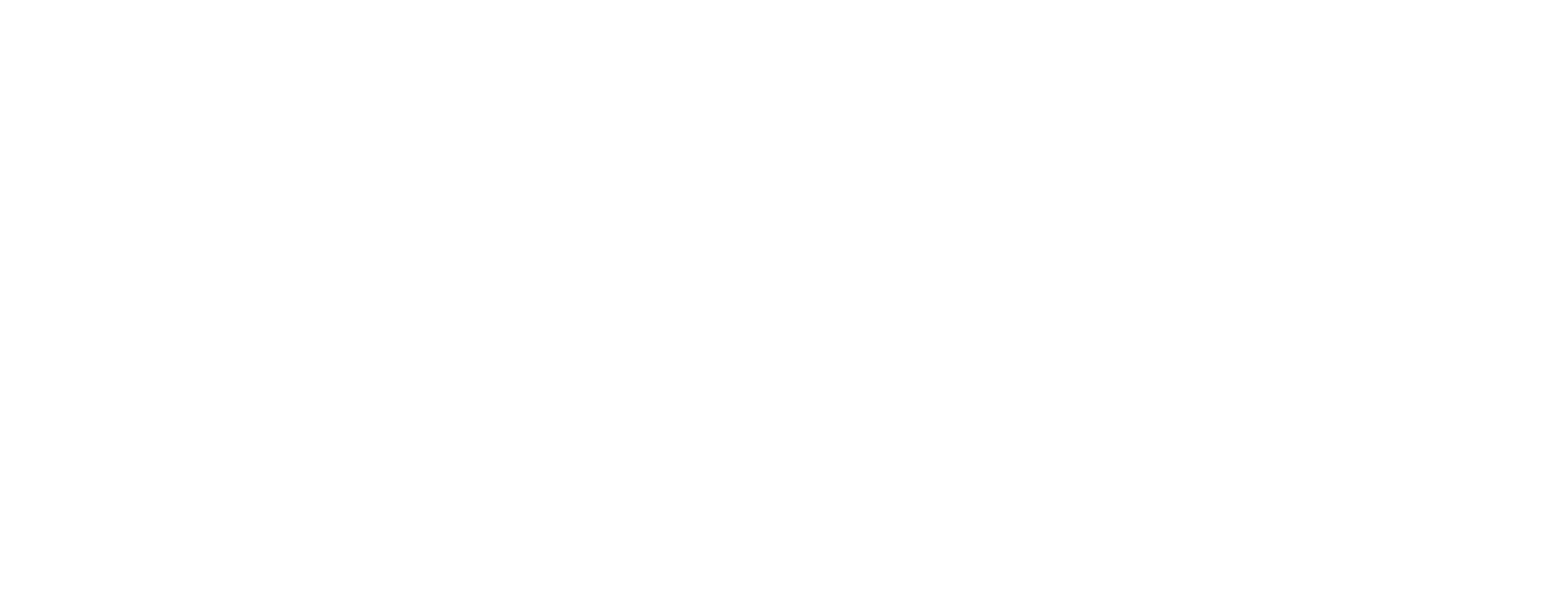}}%
    \put(0.13663622,0.00728123){\color[rgb]{0,0,0}\makebox(0,0)[lt]{\lineheight{1.25}\smash{\begin{tabular}[t]{l}any conformal map\end{tabular}}}}%
    \put(0.59869755,0.00765627){\color[rgb]{0,0,0}\makebox(0,0)[lt]{\lineheight{1.25}\smash{\begin{tabular}[t]{l}alignment\end{tabular}}}}%
    \put(0.48805758,0.34895126){\color[rgb]{0,0,0}\makebox(0,0)[lt]{\lineheight{1.25}\smash{\begin{tabular}[t]{l}$f$\end{tabular}}}}%
    \put(0,0){\includegraphics[width=\unitlength,page=2]{figs/alignment/alignment.pdf}}%
  \end{picture}%
\endgroup%

\caption{For disk-like surfaces, given two landmark points there is a unique conformal map which maps the first point (red) to zero and maps the second one (blue) to the $x$-axis.}\label{fig:alignment}
\end{figure}
We test our model for disk-like surfaces on a dataset of anatomical shapes provided by  \cite{Boyer_2011}. In particular, we choose the shapes of teeth, which is one of three types of bone in this dataset. To create the representation, we first take an intermediate conformal map, which maps the teeth to the unit disk by the algorithm from \cite{Choi_2015}.

Several landmark points are available in  \cite{Boyer_2011}, hence we choose two landmark points $u_i$, $v_i$  for every shape $M_i$. We know that the conformal automorphisms of the unit disk have the form
\[f(z) = e^{i\theta} \frac{z-a}{1-\overline{a}z},\]
where $\theta\in\mathbb{R}$ and $a \in \mathbb{C}$. Set $a= u_i$ and $\theta$ such that $f(v_i) \in \mathbb{R}$. Clearly, this uniquely determined map $f_{a,\theta}$ satisfies $f(u_i) = 0 $ and $f(v_i)\in \mathbb{R}$. Fixing a reference shape $M_0$, for any shape $M_i$ we apply the alignment map $f_0^{-1}\circ f_i$ for every shapes. 

All the aligned disk meshes are then mapped to the square via the Schwarz-Christoffel mapping. The functions are interpolated on the $256 \times 256$ grid using the \texttt{scatteredInterpolant} function in Matlab. 

For spherical surfaces we take the dataset of 1240 cars from ShapeNet \cite{shapenet2015}. All the shapes are converted into genus-$0$ surfaces by Umetani \cite{Umetani_2017}. Then we create the aligned conformal parameterization by the canonical M\"obius transformation \cite{doi:10.1111/cgf.13503}. The canonical domain with is obtained by subdividing the icosahedron twice so it has $20\times 4^2 = 320$ faces. Each face is assigned with a $32 \times 32$ grid. Hence, each shape is represented by a $320 \times 32\times 32\times 2$-dimensional tensors.

The randomly generated teeth and cars are shown in the appendix as well as their curvature representation.

\subsection{Generation of unaligned data}
\label{subsec:unaligned}

\paragraph*{Discussion of local invariance}
\begin{wrapfigure}{r}{0.3\columnwidth}

\vspace{-\intextsep}
\hspace*{-1.2\columnsep}
\def\svgwidth{0.4\columnwidth}
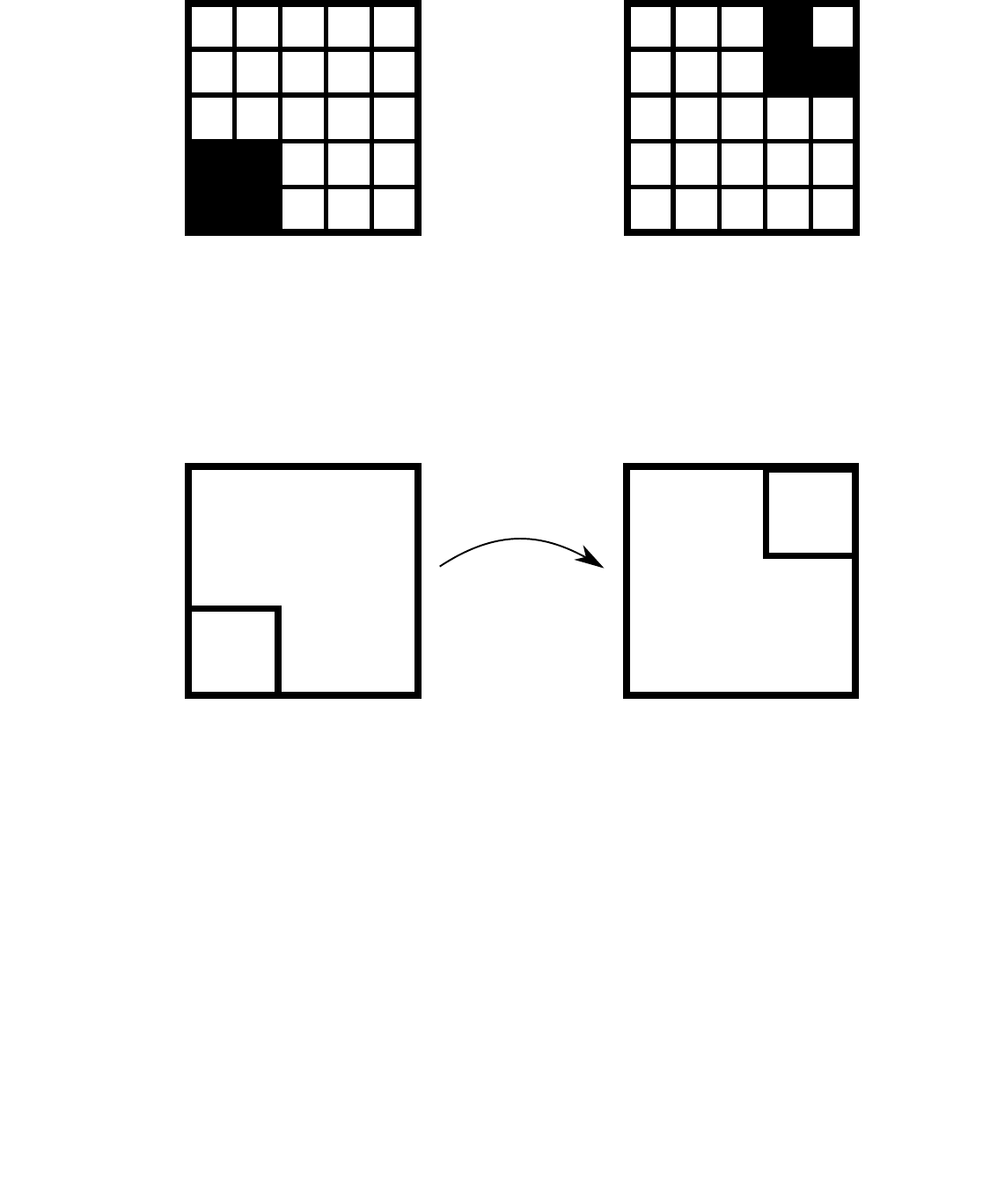\label{fig:local_inv}
\end{wrapfigure}
We call two functions $f_1$ and $f_2$ local invariant if they have the same function value but only differ by a transformation $g$ of domain, i.e., $f_1 = f_2 \circ g$. Traditional CNNs are able to capture the translational features such as (a) of inset. Hence one would expect the CNNs for 3D shapes with the similar properties like local invariance under translation, rotation or even scaling. However, 3D generative models based on position, such as point cloud and mesh, will not have such properties due to the varied function value of coordinates (see (b)). This makes it more difficult for CNNs to extract meaningful information. The voxel-based models are local invariant, but they are not applicable for data with high resolution due to the high cost of memory and computation. Some multi-resolution representations, e.g., octree \cite{ogn2017,Wang_2018}, are designed to overcome this problem, but the local invariant property does not hold anymore. In contrast, our model (sketched by (c)), together with the CNN on the sphere, provides an efficient way to learn the 3D data without a certain alignment. We verify our argument with the following two examples.

\paragraph*{Learning unaligned anatomical data}

\begin{wrapfigure}{r}{0.3\columnwidth}

\vspace{-\intextsep}
\hspace*{-1.1\columnsep}
\includegraphics[width = 0.4\columnwidth]{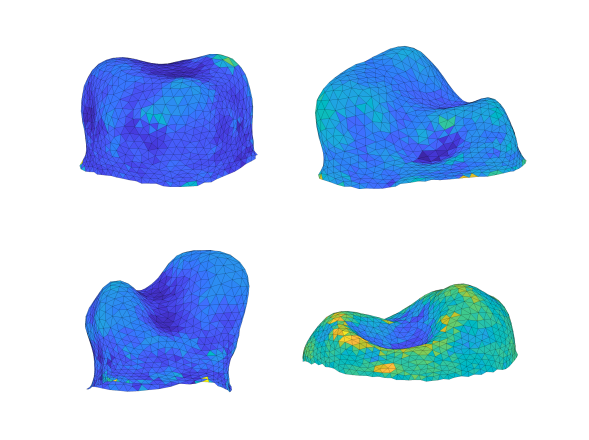}
\vspace{-2\intextsep}
\end{wrapfigure}
We merge three different anatomical models in \cite{Boyer_2011} and create the representations without any alignment methods. Insect shows the randomly generated bones of different types. Compared with Figure \ref{fig:car} the bones get smoothed due to the expanded shape space. However, we show that our model is still capable to extract the meaningful information from the ambiguity by visualizing the latent space distribution (Figure \ref{fig:cluster}). We compare the result with a baseline model that has the same network architecture but operates on the coordinate functions.

\begin{figure}[h!]
\centering

\def\svgwidth{1\columnwidth}
\begingroup%
  \makeatletter%
  \providecommand\color[2][]{%
    \errmessage{(Inkscape) Color is used for the text in Inkscape, but the package 'color.sty' is not loaded}%
    \renewcommand\color[2][]{}%
  }%
  \providecommand\transparent[1]{%
    \errmessage{(Inkscape) Transparency is used (non-zero) for the text in Inkscape, but the package 'transparent.sty' is not loaded}%
    \renewcommand\transparent[1]{}%
  }%
  \providecommand\rotatebox[2]{#2}%
  \newcommand*\fsize{\dimexpr\f@size pt\relax}%
  \newcommand*\lineheight[1]{\fontsize{\fsize}{#1\fsize}\selectfont}%
  \ifx\svgwidth\undefined%
    \setlength{\unitlength}{542.09530688bp}%
    \ifx\svgscale\undefined%
      \relax%
    \else%
      \setlength{\unitlength}{\unitlength * \real{\svgscale}}%
    \fi%
  \else%
    \setlength{\unitlength}{\svgwidth}%
  \fi%
  \global\let\svgwidth\undefined%
  \global\let\svgscale\undefined%
  \makeatother%
  \begin{picture}(1,0.33704966)%
    \lineheight{1}%
    \setlength\tabcolsep{0pt}%
    \put(0,0){\includegraphics[width=\unitlength,page=1]{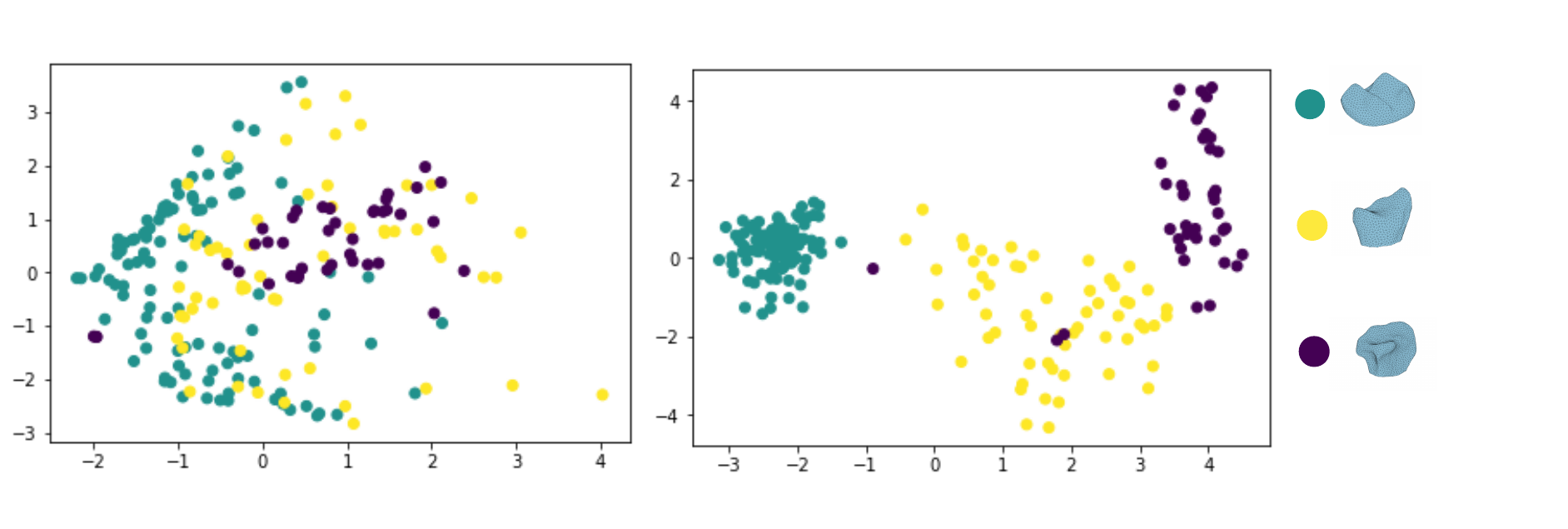}}%
    \put(0.1086354,0.00263302){\color[rgb]{0,0,0}\makebox(0,0)[lt]{\lineheight{1.25}\smash{\begin{tabular}[t]{l}coordinate\end{tabular}}}}%
    \put(0.54187507,0.00026121){\color[rgb]{0,0,0}\makebox(0,0)[lt]{\lineheight{1.25}\smash{\begin{tabular}[t]{l}curvature\end{tabular}}}}%
    \put(0.82623261,0.23440714){\color[rgb]{0,0,0}\makebox(0,0)[lt]{\lineheight{1.25}\smash{\begin{tabular}[t]{l}\small teeth\end{tabular}}}}%
    \put(0.83312527,0.15470072){\color[rgb]{0,0,0}\makebox(0,0)[lt]{\lineheight{1.25}\smash{\begin{tabular}[t]{l}\small mt1\end{tabular}}}}%
    \put(0.82443457,0.06704256){\color[rgb]{0,0,0}\makebox(0,0)[lt]{\lineheight{1.25}\smash{\begin{tabular}[t]{l}\small radius\end{tabular}}}}%
  \end{picture}%
\endgroup%

\caption{Latent space visualization. The dataset is composed of three different types of anatomical surfaces. We project the latent space representation on a $2$-dimensional space by PCA. Though all the shapes are packed without alignment, the three types of bones are clearly separated in the latent space. In contrast, the model based on the coordinate failed to learn the structure of the bones, so their distribution in the latent space is not well separated.} \label{fig:cluster}
\end{figure}

\paragraph*{Generation of transformed cars}
In this experiment we would like to see whether the 3D generative models are able to correctly predict shapes with various transformations. The dataset is created by randomly translating, rotating and scaling a single shape of car in the cube of size $[-1,1]\times [-1,1]\times [-1,1]$. We train autoencoders based on different models on $900$ training data and test them on $100$ validation data. The comparison shows that our method produces more accurate predictions than others (see Figure \ref{fig:trcar}). Since only our model considers the mesh structure of shapes, to make a fair comparison, we evaluate the results with Chamfer distance which only depends on the underlying point clouds. Note that, as a trade-off, our representation loses the information of translation and scaling. Thus we first normalize the shapes reconstructed from our model and then calculate the Chamfer distance to the ground truth.

\subsection{Cortical surface generation}
\label{subsec:closed_surf_gen}
\toremove{
\paragraph*{Random generation of cars}
The representation is built over the aligned spherical metrics made by M\"obius registration \cite{doi:10.1111/cgf.13503}. The function of mean curvature half-density is linearly scaled in the range $[-0.2,0.5]$ and the logarithmic density is scaled in $[5,10]$.
The architecture of VAE is depicted in Fig.~\ref{fig:vae}. We train the models with batch size $20$ and $200$ epochs. The randomly generated cars are shown in Fig.~\ref{fig:car}. The shapes of car are reconstructed by minimizing the Dirac energy with step length $1$ once without area correction.
}

To show that our model is particularly good at preserving the fine structure, we perform the experiment on human cortical surfaces, which are highly folded with a lot of "hills" and "valleys". A dataset of cortical surfaces are available on the Open Access Series of Imaging Studies (OASIS) \cite{marcus2010open}. The MRI images are converted to genus-$0$ surfaces via the open-source reconstruction software FreeSurfer (http://surfer.nmr.mgh.harvard.edu/).  

We first compare our model to three other state-of-art autoencoders for 3D shapes. Figure \ref{fig:brain1} shows that, although all models succeed in characterizing the shapes in a large scale, our model preserves much more small features, e.g., the curvature, than the others.

\paragraph*{Training details} Our model and the baseline model are trained with 200 epochs for around 5 hours. The point-cloud AE \cite{Achlioptas2018LearningRA} with 2048 points for each data and AtlasNet \cite{groueix2018} with 2500 points for each data are both trained with 500 epochs for  approximately 4 hours. Although the point-cloud based models above have smaller data size than ours, the training of their neural networks already exhausted our GPU memory. The OGN, with the octree representation of $128\times 128 \times 128$-dimensional voxels, is trained with 4000 epochs with 5 hours. While other models produce the shapes instantly after training, it takes 2 minutes with our method to reconstruct a mesh with $10000$ vertices from curvature.

Next, we compare the cortical surfaces 
randomly generated by our VAE to the ones by Multi-chart GAN \cite{Ben_Hamu_2018} (Figure \ref{fig:brain2}). While both mesh-based models generate significantly more faithful results than other types of representation in Figure \ref{fig:brain1}, the "hills" and "valleys" are much more visible with our model. Moreover, we only choose 3 landmark points on each shape to align the conformal parameterization, while it requires $21$ landmark points to create $16$ charts as in \cite{Ben_Hamu_2018}, and even a template shape, which amounts to a dense correspondence, to reconstruct the final shapes.

At last, we try to create an autoencoder that converts the 3D MRI images of brain to cortical surfaces. In this case, the encoder consists of several 3D convolutional layers (see Figure \ref{fig:vol_network}) and the decoder is the same as the ones in previous experiments. Figure \ref{fig:vol2surf} shows that our model is able to predict the cortical surface from the MRI volume to a certain extent, but the accuracy is not yet optimal, because the neural network failed to capture the spatial correspondence between the volumetric data and the spherical data. We leave the construction of a finer 3D-to-2D autoencoder to future work.

\begin{figure*}
\centering
\def\svgwidth{0.9\textwidth}
\import{figs/vol2surf/}{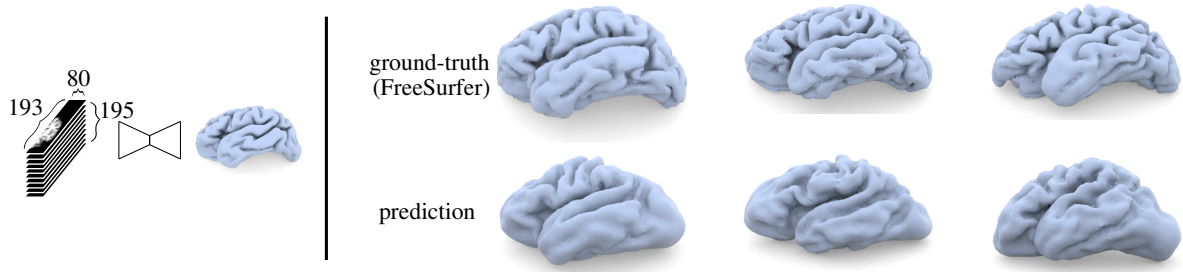}
\caption{Volume-Curvature autoencoder. The input is the MRI volumetric data from \cite{marcus2010open}. Since only the left hemisphere is generated, we align the volume using FreeSurfer \cite{reuter2010highly} and chop the volume properly such that the dimension is $193\times 80 \times 195$. The encoder is shown in Figure \ref{fig:vol_network} and the decoder is the same as Figure \ref{fig:sph_network}.}\label{fig:vol2surf}
\end{figure*}

\section{Limitations and future work}
\begin{wrapfigure}{r}{0.3\columnwidth}
\centering
\hspace*{-1\columnsep}
\def\svgwidth{0.4\columnwidth}
%
\begingroup%
  \makeatletter%
  \providecommand\color[2][]{%
    \errmessage{(Inkscape) Color is used for the text in Inkscape, but the package 'color.sty' is not loaded}%
    \renewcommand\color[2][]{}%
  }%
  \providecommand\transparent[1]{%
    \errmessage{(Inkscape) Transparency is used (non-zero) for the text in Inkscape, but the package 'transparent.sty' is not loaded}%
    \renewcommand\transparent[1]{}%
  }%
  \providecommand\rotatebox[2]{#2}%
  \newcommand*\fsize{\dimexpr\f@size pt\relax}%
  \newcommand*\lineheight[1]{\fontsize{\fsize}{#1\fsize}\selectfont}%
  \ifx\svgwidth\undefined%
    \setlength{\unitlength}{475.50000288bp}%
    \ifx\svgscale\undefined%
      \relax%
    \else%
      \setlength{\unitlength}{\unitlength * \real{\svgscale}}%
    \fi%
  \else%
    \setlength{\unitlength}{\svgwidth}%
  \fi%
  \global\let\svgwidth\undefined%
  \global\let\svgscale\undefined%
  \makeatother%
  \begin{picture}(1,0.88664081)%
    \lineheight{1}%
    \setlength\tabcolsep{0pt}%
    \put(0,0){\includegraphics[width=\unitlength,page=1]{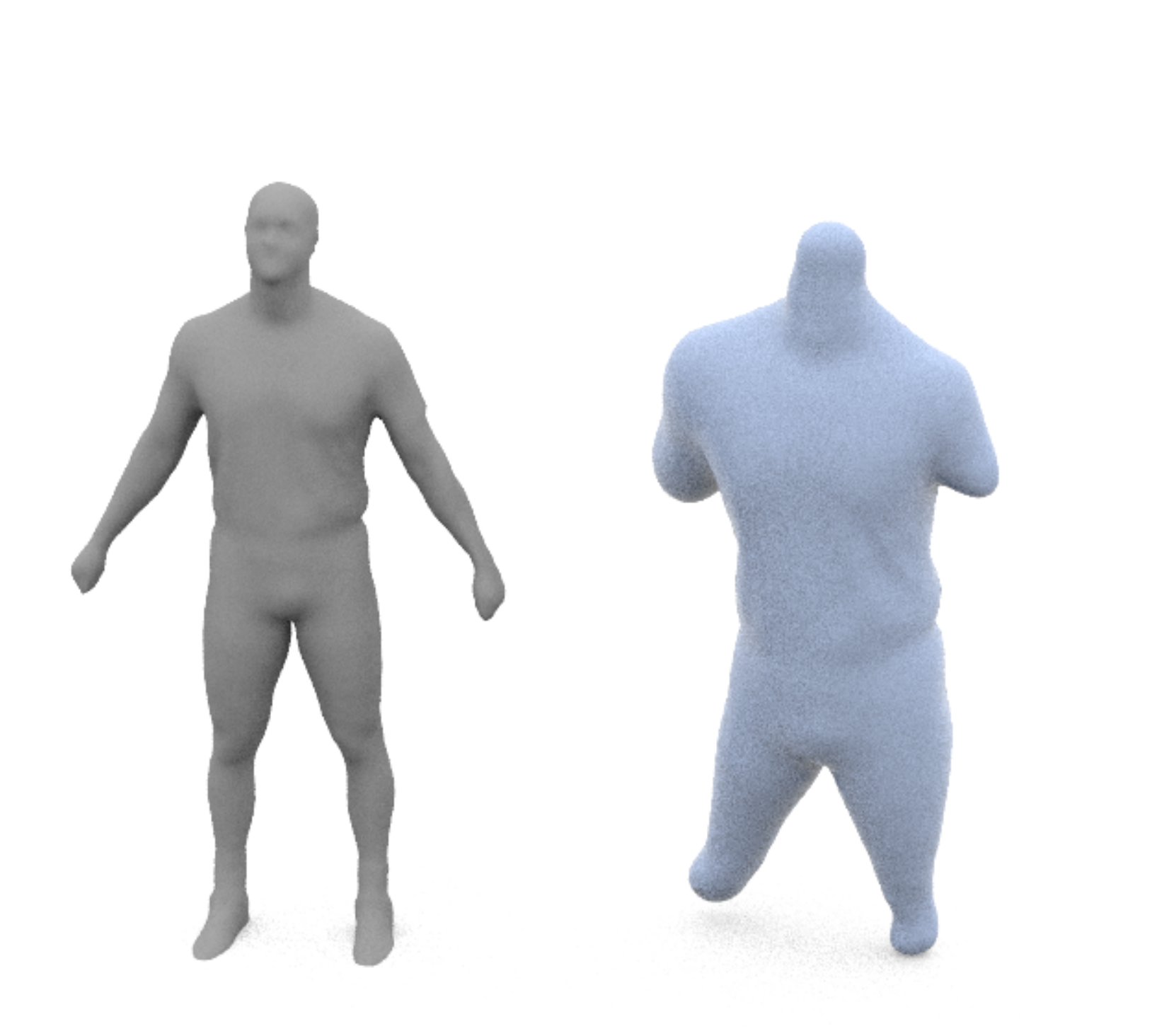}}%
    \put(0.04889582,0.8690853){\color[rgb]{0,0,0}\makebox(0,0)[lt]{\lineheight{1.25}\smash{\begin{tabular}[t]{l}Ground \end{tabular}}}}%
    \put(0.51419556,0.87066256){\color[rgb]{0,0,0}\makebox(0,0)[lt]{\lineheight{1.25}\smash{\begin{tabular}[t]{l}Failed\end{tabular}}}}%
    \put(0.08066691,0.77422255){\color[rgb]{0,0,0}\makebox(0,0)[lt]{\lineheight{1.25}\smash{\begin{tabular}[t]{l}Truth\end{tabular}}}}%
    \put(0.48602973,0.77895442){\color[rgb]{0,0,0}\makebox(0,0)[lt]{\lineheight{1.25}\smash{\begin{tabular}[t]{l}Example\end{tabular}}}}%
  \end{picture}%
\endgroup%

\end{wrapfigure}
First, currently it is difficult to model the shapes like long tubes, such as arms and legs of human, because the conformal parameterization of such shapes always has extremely large area distortion. The information easily gets lost while being transferred from such regions to the canonical domain (inset), unless one uses a domain with extremely high resolution. A solution might be a multi-resolution data structure, such as \cite{Grinspun_2002, Wang_2018}. Then it is desirable to design a structure of neural network that is specifically adapted to such multi-resolutional data structures.

Second, to make our model fully rotational invariant rather than just local invariant, one might combine our representation with the equivariant neural networks by Cohen et al. \cite{s.2018spherical}, so that the alignment procedure can be completely removed. Then it would be interesting to develop a corresponding decoder network.

\section{Conclusion}
We propose a novel intrinsic representation of 3D surfaces based on mean curvature and  metric. A 3D generative model is built based on this representation and it manifests better performance than other models in capturing the fine structure and the symmetry of the ambient space.

\bibliographystyle{eg-alpha-doi} 
\bibliography{reference}       

\section{Appendix}
\subsection{Compute the weighted centroid of polygons}
\label{app:CVT}
The weighted centroid of a polygon is given by
\begin{equation*}
 v^* = \frac{\int_ V y \mathfrak{d}(y) dy}{\int_V  \mathfrak{d}(y) dy}.
\end{equation*}
A Voronoi cell is naturally decomposed in several triangles, of which we first compute the weighted centroid.

\begin{wrapfigure}{r}{0.25\columnwidth}
\centering
\hspace*{-1\columnsep}
\def\svgwidth{0.25\columnwidth}
\begingroup%
  \makeatletter%
  \providecommand\color[2][]{%
    \errmessage{(Inkscape) Color is used for the text in Inkscape, but the package 'color.sty' is not loaded}%
    \renewcommand\color[2][]{}%
  }%
  \providecommand\transparent[1]{%
    \errmessage{(Inkscape) Transparency is used (non-zero) for the text in Inkscape, but the package 'transparent.sty' is not loaded}%
    \renewcommand\transparent[1]{}%
  }%
  \providecommand\rotatebox[2]{#2}%
  \newcommand*\fsize{\dimexpr\f@size pt\relax}%
  \newcommand*\lineheight[1]{\fontsize{\fsize}{#1\fsize}\selectfont}%
  \ifx\svgwidth\undefined%
    \setlength{\unitlength}{222.13217242bp}%
    \ifx\svgscale\undefined%
      \relax%
    \else%
      \setlength{\unitlength}{\unitlength * \real{\svgscale}}%
    \fi%
  \else%
    \setlength{\unitlength}{\svgwidth}%
  \fi%
  \global\let\svgwidth\undefined%
  \global\let\svgscale\undefined%
  \makeatother%
  \begin{picture}(1,0.96901728)%
    \lineheight{1}%
    \setlength\tabcolsep{0pt}%
    \put(0,0){\includegraphics[width=\unitlength,page=1]{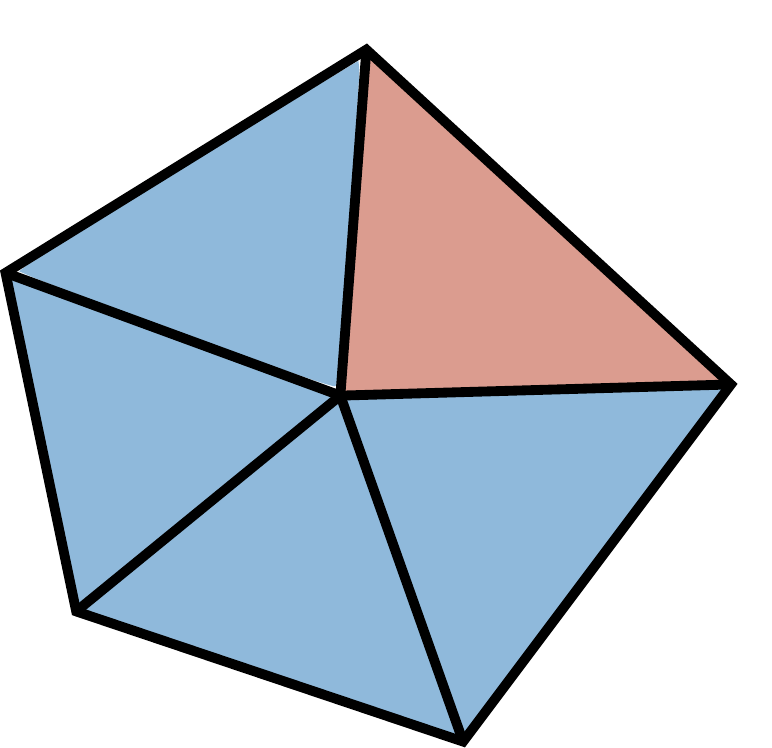}}%
    \put(0.37277216,0.52554996){\color[rgb]{0,0,0}\makebox(0,0)[lt]{\lineheight{1.25}\smash{\begin{tabular}[t]{l}$v_1$\end{tabular}}}}%
    \put(0.8686162,0.48623935){\color[rgb]{0,0,0}\makebox(0,0)[lt]{\lineheight{1.25}\smash{\begin{tabular}[t]{l}$v_2$\end{tabular}}}}%
    \put(0.36626058,0.93481397){\color[rgb]{0,0,0}\makebox(0,0)[lt]{\lineheight{1.25}\smash{\begin{tabular}[t]{l}$v_3$\end{tabular}}}}%
  \end{picture}%
\endgroup%

\end{wrapfigure}
Denote the density on the vertex $v_i$ by $\mathfrak{d}_i$ and we assume that the density is linearly interpolated on every triangle. The denominator of $v^*$ is called the weighted area, which is given by $\mathfrak{A}_i = \frac{\mathfrak{d}(v_1) + \mathfrak{d}(v_2) + \mathfrak{d}(v_3)}{3} A_i$, where $A_i$ is the triangle area.

Integrating the linear function on the triangle $i$, we obtain
\[v_i^* = \frac{(2\mathfrak{d}_1+ \mathfrak{d}_2+ \mathfrak{d}_3)v_1 + (\mathfrak{d}_1 + 2\mathfrak{d}_2 + \mathfrak{d}_3)v_2 + (\mathfrak{d}_1 + \mathfrak{d}_2 + 2\mathfrak{d}_3)v_3}{4(\mathfrak{d}_1 + \mathfrak{d}_2+ \mathfrak{d}_3)}.\]

Then, the centroid of the polygon is the weighted sum
\[v^* =  \frac{\sum_i v_i^* \cdot \mathfrak{A}_i}{\sum_i \mathfrak{A}_i}.\]

\subsection{Proof of Thm. \ref{thm:closing} (Closing condition for prescribing the area factor)}
\label{app:closing}
\begin{proof}
Let $(x,y)$ be a conformal coordinate of the immersion $f:M\rightarrow\mathbb{R}^3$. The left-hand side of \eqref{eqn:closing_area} is actually 
\[\phi_x \cdot \phi^{-1} dx + \phi_y \cdot \phi^{-1} dy, \]
while the right hand side reads
\begin{align}
-\frac{1}{2} &(-u_x f_x^{-1} -u_y f_y^{-1})\cdot (f_x dx + f_y dy) \nonumber\\
&= \frac{1}{2}((u_x + u_y f_y^{-1}f_x) dx  + (u_y+u_xf_x^{-1}f_y)dy) \nonumber\\
&= \frac{1}{2}( (u_x + u_y n) dx + (u_y-u_xn) dy). \label{eqn:rhs}
\end{align}
The equation \eqref{eqn:closing_area} implies that
\begin{align*}
\phi_x \cdot \phi^{-1} &= \frac{1}{2}(u_x+u_yn),\\
\phi_y \cdot \phi^{-1} &= \frac{1}{2}(u_y-u_xn).
\end{align*}
Substituting the equations above into the Dirac operator \eqref{eqn:dirac_eq} in local form, we obtain
\begin{align*}
D_f \phi &= f_x \phi_y - f_y \phi_x\\
&= \frac{1}{2}f_x(u_y-u_xn) - \frac{1}{2}f_y(u_x+u_yn)\\
&=0, 
\end{align*}
by $f_x\cdot n= -f_y$ and $f_y\cdot n = -f_x$.
\end{proof}
\toremove{It suffices to check the terms with $dx$ on both sides. Clearly, on the right hand side the real part  of $dx$ is $\frac{1}{2}u_x$, which coincides with the real part of $\phi_x\cdot \phi^{-1}$, since $e^u = \lvert \phi\rvert^2 = \phi \overline{\phi} \Rightarrow e^u u_x = \phi_x \overline{\phi} + \phi \overline{\phi_x}$, and hence 
\begin{equation}
\label{eqn:ux}
u_x = \phi_x\phi^{-1} + \overline{\phi}^{-1} \overline{\phi_x} = 2\re(\phi_x \phi^{-1}).
\end{equation}

The Dirac equation \eqref{eqn:dirac_eq} can be written in local form as
\[f_x \phi_y - f_y \phi_x = c \phi,\]
where $c = \lvert df\rvert^2 \rho$ is a real-valued function. It follows that $\phi_y = f_x^{-1}f_y \phi_x + cf_x^{-1} \phi = -n \phi_x - \frac{c f_x \phi}{\lvert f_x\rvert^2}$. By the same argument as \eqref{eqn:ux} we have $u_y=\phi_y\phi^{-1} + \overline{\phi}^{-1} \overline{\phi_y}$. Hence the imaginary part of $dx$ in \eqref{eqn:rhs} is 
\begin{align*}
\frac{1}{2}u_x &=\frac{1}{2}(\phi_y \phi^{-1} + \overline{\phi}^{-1} \overline{\phi_y}) n \\
&= \frac{1}{2}\left(\left( -n \phi_x -  \frac{c f_x \phi}{\lvert f_x\rvert^2}\right)\phi^{-1} + \overline{\phi}^{-1}\left(\overline{\phi_x}n + \frac{c\overline{\phi}f_x}{\lvert f_x\rvert^2}\right)\right)n \\
&= \frac{1}{2}\left( \phi_x \phi^{-1} - \overline{\phi}^{-1} \overline{\phi_x}\right) = \im(\phi_x \phi^{-1}),
\end{align*}
where $-n\phi_xn = \phi_x$ is used.}

\subsection{Finite element method for quaternion gradient}
To obtain the discrete formula of the energy $\lvert \omega^2\rvert$, we first derive the formula of the quaternion gradient in discrete case.

Let $h:M\rightarrow \mathbb{R}$ be any function. We know that the gradient is defined by $\grad u:= (du)^\sharp$, where $\sharp:T^*M \rightarrow TM$ is called raising indices defined by
\[\langle \omega^\sharp ,v\rangle  = \omega (v),\quad \text{ for any } v \in TM\]

\begin{wrapfigure}{r}{0.25\columnwidth}
\centering
\hspace*{-1\columnsep}
\def\svgwidth{0.3\columnwidth}
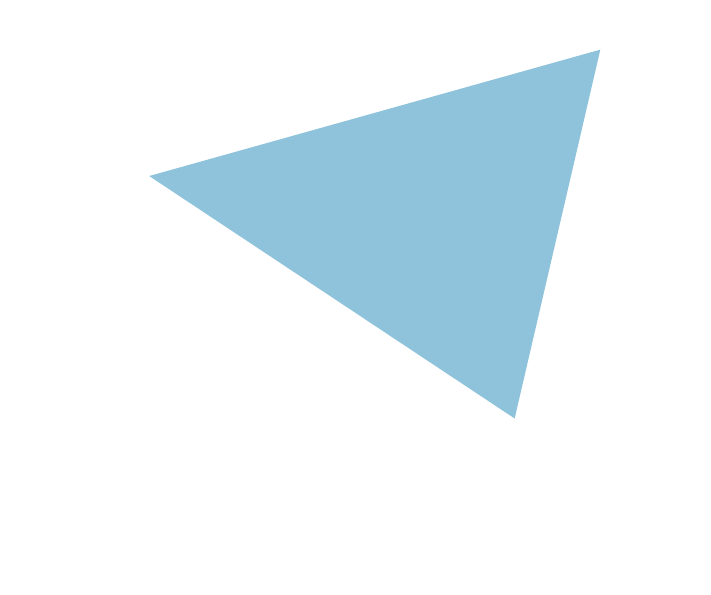
\end{wrapfigure}
In a triangle $i$ in quaternion space with the oriented edges $a$, $b$, $c\in \mathbb{H}$, we choose a coordinate $(x,y)$ system (inset). Assuming that $h$ is a linear function with the value $h_1$, $h_2$, $h_3$ at the vertices, write $dh$ in local form as:
\[ dh = (h_2-h_1) dx + (h_3-h_1) dy\]
Since $\langle (dx)^\sharp, \partial x\rangle = 1$ and $\langle (dx)^{\sharp} , \partial y \rangle =0$,  $df((dx)^{\sharp})$ is perpendicular to $b$ and has the length $\frac{1}{\lvert c\rvert \sin \theta} = \frac{\lvert b\rvert}{2A}$, where $A$ is the area of the triangle. Thus $df(dx^\sharp) = \frac{n\cdot b}{2A}$ and, by the same argument, we have $df(dy^\sharp) = \frac{n\cdot c}{2A}$. 

Therefore,
\[\grad_f h = \frac{n}{2A}( a h_1 + b h_2 + c h_3)\]

\subsection{The energy of quaternion $1$-form}
\label{subsec:area_energy}
We discretize the energy 
\[E_u = \lvert \omega\rvert^2 = \lvert d\phi + \frac{1}{2} Gdf\phi\rvert^2 \]
in the scheme of finite element method. In the local coordinate system above, the metric and its inverse read:
\[ g= \begin{pmatrix} \lvert c\rvert^2 & -\langle c,b\rangle  \\ -\langle c,b\rangle & b^2\end{pmatrix},\quad  g^{-1} = \frac{1}{2A} \begin{pmatrix} b^2 & \langle c,b\rangle  \\ \langle c,b\rangle & \lvert c\rvert^2 \end{pmatrix}.\]
With $\omega= \omega_x dx + \omega_y dy $, \eqref{eqn:quat_oneform_metric} becomes 
\[\int (\lvert \omega_x\rvert^2 \lvert b\rvert ^2 +\langle c,b\rangle (\overline{\omega_x}\omega_y + \overline{\omega_y}\omega_x) + \lvert \omega_y\rvert^2 \lvert c\rvert ^2 ) dx\wedge dy.\]
Now, we work out the formula $ \omega  = d \phi + \frac{1}{2} G df \phi$ in one triangle: 
\begin{align*}
\omega &= \left( (\phi_2 -\phi_1 ) +\frac{1}{2} G\cdot c((1-x-y)\phi_1 + x\phi_2 + y\phi_3) \right) dx \\
&+ \left( (\phi_3 -\phi_1) - \frac{1}{2}G\cdot b((1-x-y)\phi_1 + x\phi_2 + y\phi_3) \right)dy 
\end{align*}
where
\begin{align*}
G\cdot c &= u_1 - u_2 + \frac{n}{2A}(-\langle a,c\rangle u_1 - \langle b,c\rangle u_2 -\lvert c\rvert^2 u_3)\\
G\cdot b &= -u_1 + u_3 + \frac{n}{2A}(-\langle a,b\rangle u_1 - \lvert b\rvert^2 u_2 - \langle c,b\rangle u_3)
\end{align*}
The energy $E_u$ is a $\lvert V\rvert\times \lvert V\rvert$ quaternion-valued matrix. With a tedious calculation the entries related to the triangle are given by
\begin{align*}
\lvert \omega\rvert^2_{11} &= \frac{1}{2}\lvert a\rvert^2 - \frac{1}{6}(\lvert a\rvert^2 u_1 + \langle b, a\rangle u_2 + \langle c,a\rangle u_3)+\frac{1}{6} \lvert G\rvert^2 A^2,\\
\lvert \omega\rvert_{23}^2 &= \frac{1}{2}\langle b,c\rangle + \frac{ \lvert G\rvert^2 A^2}{12} \\
&\phantom{=} +  \frac{1}{12}((4An+\lvert a\rvert^2)u_1 -(a\cdot b)u_2 - (c\cdot a) u_3)
\end{align*}
where
\begin{align*}
\lvert G\rvert^2 &= \frac{1}{4A^2}(a^2 u_3^2 + b^2 u_2^2 + c^2 u_1^2 \\
&\phantom{=} + 2\langle a,b\rangle u_1u_2 + 2\langle b,c\rangle u_2u_3 +2 \langle c,a\rangle u_3u_1).
\end{align*}

\begin{figure*}
	\centering
	\def\svgwidth{0.8\textwidth}
	\import{figs/trcar/}{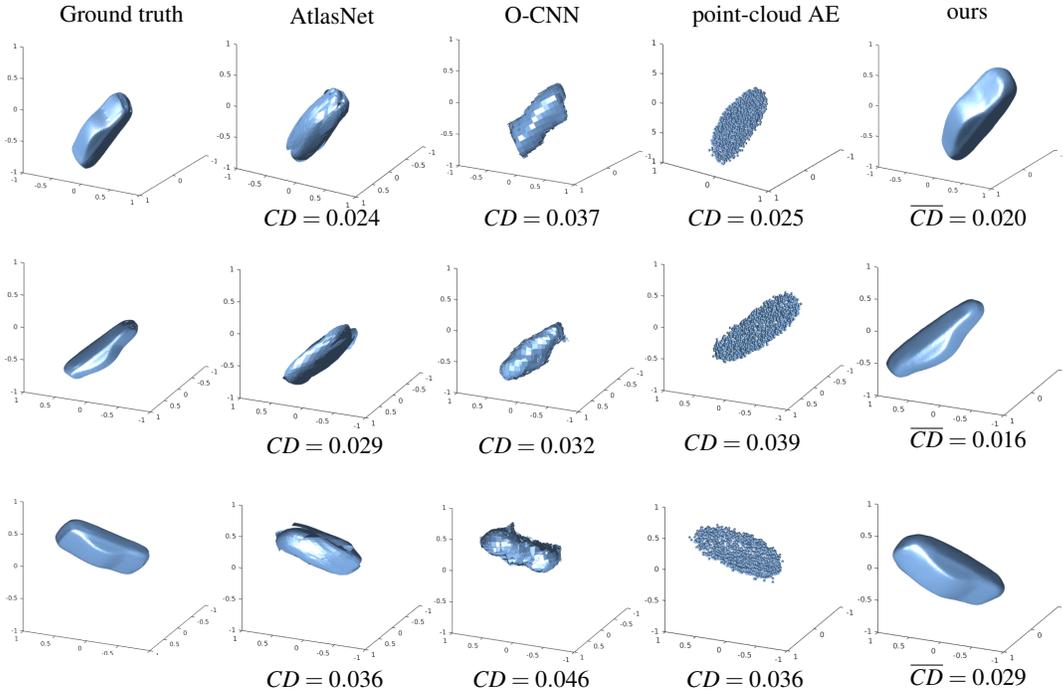}
	\caption{Autoencoder for transformed cars. We transform a shape of car by applying random translation, scaling and rotation. We demonstrate our results with other models based on the point clouds, namely the point-cloud AE \cite{Achlioptas2018LearningRA} and the AtlasNet \cite{groueix2018}, the one based on voxels, namely the O-CNN \cite{Wang_2018}. Other methods, though were shown to achieve satisfying results on the aligned dataset, do not correctly capture the symmetry of various transformations. In contrast, our model succeeds in producing convincing transformed shapes. We evaluate the results by measuring the Chamfer distance $\overline{CD}$. However, since our model loses the information of translation and scaling, we have to first normalize the volume of the results with a centered position (unnormalized shapes are shown above). In the end we compute the Chamfer distance of the normalized outputs $\overline{CP}$.}\label{fig:trcar}
\end{figure*}

\begin{figure*}
\centering
\def\svgwidth{0.7\textwidth}
\begingroup%
  \makeatletter%
  \providecommand\color[2][]{%
    \errmessage{(Inkscape) Color is used for the text in Inkscape, but the package 'color.sty' is not loaded}%
    \renewcommand\color[2][]{}%
  }%
  \providecommand\transparent[1]{%
    \errmessage{(Inkscape) Transparency is used (non-zero) for the text in Inkscape, but the package 'transparent.sty' is not loaded}%
    \renewcommand\transparent[1]{}%
  }%
  \providecommand\rotatebox[2]{#2}%
  \newcommand*\fsize{\dimexpr\f@size pt\relax}%
  \newcommand*\lineheight[1]{\fontsize{\fsize}{#1\fsize}\selectfont}%
  \ifx\svgwidth\undefined%
    \setlength{\unitlength}{642.85001843bp}%
    \ifx\svgscale\undefined%
      \relax%
    \else%
      \setlength{\unitlength}{\unitlength * \real{\svgscale}}%
    \fi%
  \else%
    \setlength{\unitlength}{\svgwidth}%
  \fi%
  \global\let\svgwidth\undefined%
  \global\let\svgscale\undefined%
  \makeatother%
  \begin{picture}(1,0.55830232)%
    \lineheight{1}%
    \setlength\tabcolsep{0pt}%
    \put(0,0){\includegraphics[width=\unitlength,page=1]{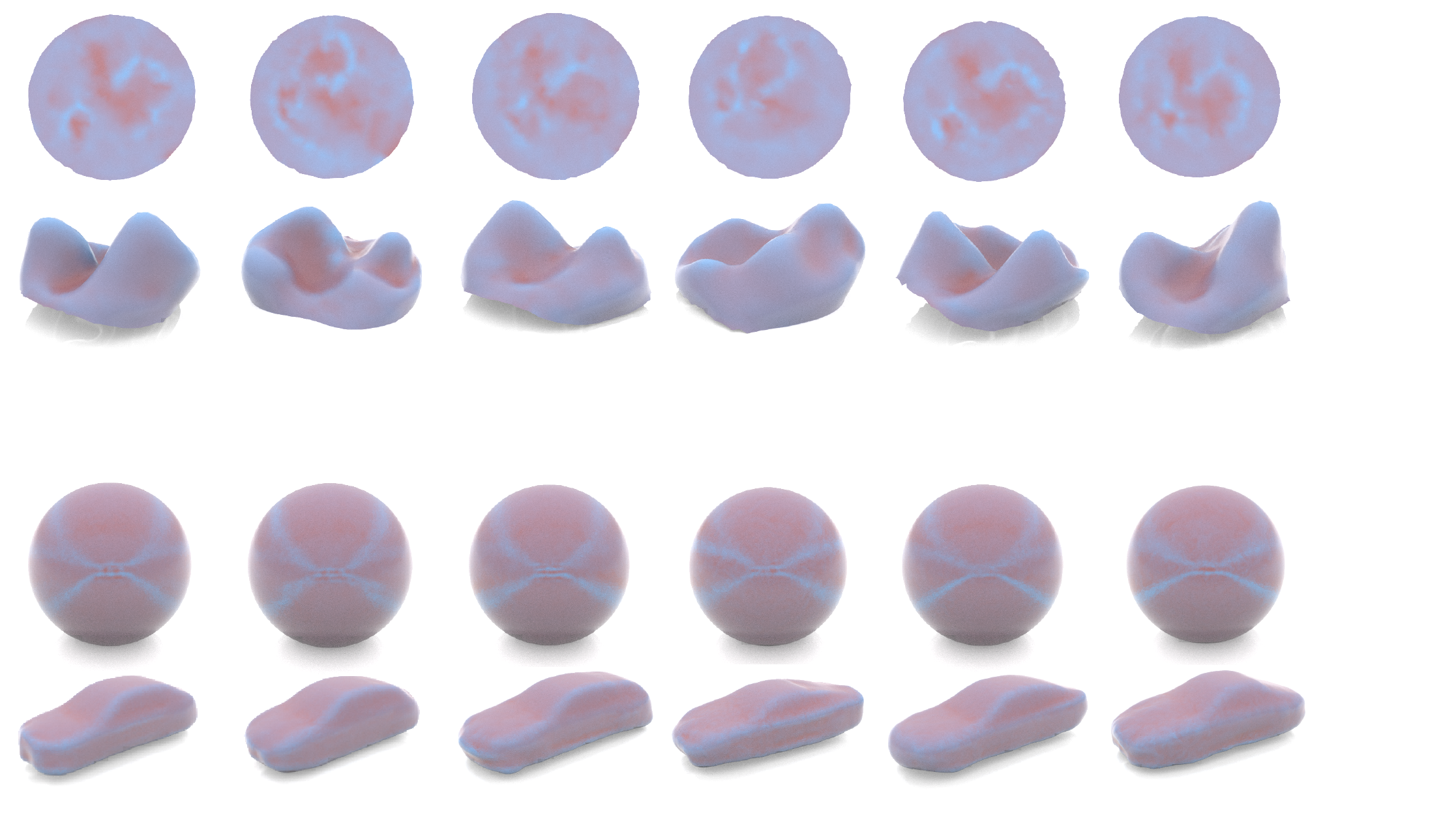}}%
    \put(0.88442842,0.3834496){\color[rgb]{0,0,0}\makebox(0,0)[lt]{\lineheight{1.25}\smash{\begin{tabular}[t]{l}\small mean curvature\end{tabular}}}}%
    \put(0.96832037,0.33004696){\color[rgb]{0,0,0}\makebox(0,0)[lt]{\lineheight{1.25}\smash{\begin{tabular}[t]{l}$1$\end{tabular}}}}%
    \put(0.96598704,0.18155287){\color[rgb]{0,0,0}\makebox(0,0)[lt]{\lineheight{1.25}\smash{\begin{tabular}[t]{l}$-1$\end{tabular}}}}%
    \put(0.90239415,0.35892082){\color[rgb]{0,0,0}\makebox(0,0)[lt]{\lineheight{1.25}\smash{\begin{tabular}[t]{l}\small half-density\end{tabular}}}}%
    \put(0,0){\includegraphics[width=\unitlength,page=2]{car_teeth.pdf}}%
  \end{picture}%
\endgroup%

\caption{Randomly generated teeth and cars via the variational autoencoder. The first and third rows show the isotropic meshings, which are induced from the generated density function, with the generated mean curvature half-density. The second and fourth rows show the resulting reconstruction. The architectures of neural networks are modified from the traditional autoencoders in Table \ref{fig:sph_network} and \ref{fig:disk_network} to variational autoencoder. }\label{fig:car}
\end{figure*}

\clearpage

\section{Architectures}
\label{app:architecture}

\begin{table}[h]
	\centering
	\scalebox{0.8}{%
	\begin{tabular}{c|c|c}
		\hline
		Encoder \\ \hline
		layers  & input & output\\
		Conv2D ($4\times 4$)  & $\scriptstyle 320\times 32 \times 32 \times 2$ & $\scriptstyle 320\times 32\times 32\times 4$\\
		BatchNormalization  & & \\
		LeakyReLu  & & \\
		Conv2D ($4\times 4$) & $\scriptstyle 320\times 32\times 32\times 4$ & $\scriptstyle 320\times 16\times 16\times 8$\\
		BatchNormalization  & & \\
		LeakyReLu  & & \\
		Conv2D ($4\times 4$) & $\scriptstyle 320\times 16\times 16\times 8$ & $\scriptstyle 320\times 8\times 8\times 16$\\
		BatchNormalization & & \\
		LeakyReLu  & & \\
		Conv2D ($4\times 4$) & $\scriptstyle 320\times 8\times 8\times 16$ & $\scriptstyle 320\times 4\times 4\times 32$\\
		BatchNormalization  & & \\
		LeakyReLu  & & \\
		Conv2D ($4\times 4$) & $\scriptstyle 320\times 4\times 4\times 32$ & $\scriptstyle 320\times 2\times 2\times 64$\\
		BatchNormalization  & & \\
		LeakyReLu  & & \\
		Conv2D ($4\times 4$) & $\scriptstyle 320\times 2\times 2\times 64$ & $\scriptstyle 320\times 1\times 1\times 128$\\
		BatchNormalization  & & \\
		LeakyReLu  & & \\
		Reshape  & $\scriptstyle 320 \times 1\times 1\times 128$ & $\scriptstyle 80\times 512$\\
		FC  & $\scriptstyle 80\times 512$ & $\scriptstyle 80\times 256$ \\
		BatchNormalization  & & \\
		LeakyReLu  & & \\
		Reshape  & $\scriptstyle 80\times 256$ & $\scriptstyle20\times 1024$\\
		FC  & $\scriptstyle 20\times 1024$ & $\scriptstyle20\times 512$ \\
		BatchNormalization  & & \\
		LeakyReLu  & & \\
		FC  & $\scriptstyle20\times 512$ & $\scriptstyle 200$ \\ 
		\vspace{1cm} & & \\ \hline
		Decoder \\ \hline
		layers  & input & output\\
		FC  & $\scriptstyle 200$ & $\scriptstyle 20480$ \\
		BatchNormalization & & \\
		LeakyReLu & & \\
		Reshape & $\scriptstyle 20480$ & $\scriptstyle 20\times 1024$ \\
		FC & $\scriptstyle 20 \times 1024$ & $\scriptstyle 20\times 2048$ \\
		BatchNormalization & & \\
		LeakyReLu & & \\
		Reshape & $\scriptstyle 20\times 2048$ & $\scriptstyle 80\times 512$ \\
		FC & $\scriptstyle 80 \times 512$ & $\scriptstyle 80 \times 1024$ \\
		BatchNormalization & & \\
		LeakyReLu & & \\
		Reshape & $\scriptstyle 80\times 1024$ & $\scriptstyle 320\times 2\times 2 \times 64$ \\
		Deconv2D ($4\times 4$)  & $\scriptstyle 320\times 2\times 2 \times 64$ & $\scriptstyle 320\times 4\times 4\times 32$ \\
		BatchNormalization & & \\
		LeakyReLu & & \\
		Deconv2D ($4\times 4$)  & $\scriptstyle 320\times 4\times 4 \times 32$ & $\scriptstyle 320\times 8\times 8\times 16$ \\
		BatchNormalization & & \\
		LeakyReLu & & \\
		Deconv2D ($4\times 4$)  & $\scriptstyle 320\times 8\times 8 \times 16$ & $\scriptstyle 320\times 16\times 16\times 8$ \\
		BatchNormalization & & \\
		LeakyReLu & & \\
		Deconv2D ($4\times 4$)  & $\scriptstyle 320\times 16\times 16 \times 8$ & $\scriptstyle 320\times 32\times 32\times 4$ \\
		BatchNormalization & & \\
		LeakyReLu & & \\
		Deconv2D ($4\times 4$)  & $\scriptstyle 320\times 32\times 32 \times 4$ & $\scriptstyle 320\times 32\times 32\times 2$
	\end{tabular}}
	\caption{The architecture for spherical surfaces.}\label{fig:sph_network}
\end{table}

\begin{table}[h]
	\centering
	\begin{tabular}{c|c|c}
		\hline
		Encoder \\ \hline
		layers  & input & output\\
		Conv2D ($4\times 4$)  & $\scriptstyle 256\times 256 \times 2$ & $\scriptstyle 128\times 128 \times 4$\\
		BatchNormalization  & & \\
		LeakyReLu  & & \\
		Conv2D ($4\times 4$)  & $\scriptstyle 128\times 128 \times 4$ & $\scriptstyle 64\times 64 \times 8$\\
		BatchNormalization  & & \\
		LeakyReLu  & & \\
		Conv2D ($4\times 4$)  & $\scriptstyle 64\times 64 \times 8$ & $\scriptstyle 32\times 32 \times 16$\\
		BatchNormalization  & & \\
		LeakyReLu  & & \\
		Conv2D ($4\times 4$)  & $\scriptstyle 32\times 32 \times 16$ & $\scriptstyle 16\times 16 \times 32$\\
		BatchNormalization  & & \\
		LeakyReLu  & & \\
		FC  & $\scriptstyle 16\times 16 \times 32$ & $\scriptstyle 100$
	\end{tabular}
	
	\bigskip
	
	\begin{tabular}{c|c|c}
		\hline
		Decoder \\ \hline
		layers  & input & output\\
		FC  & $\scriptstyle 100$ & $\scriptstyle 8192$ \\
		BatchNormalization & & \\
		LeakyReLu & & \\
		Reshape & $\scriptstyle 8192$ & $\scriptstyle 16\times 16 \times 32$ \\
		Deconv2D ($4\times 4$)  & $\scriptstyle 16\times 16\times 32$ & $\scriptstyle 32\times 132\times 16$ \\
		BatchNormalization & & \\
		LeakyReLu & & \\
		Deconv2D ($4\times 4$)  & $\scriptstyle 32\times 32\times 16$ & $\scriptstyle 64\times 64 \times 8$ \\
		BatchNormalization & & \\
		LeakyReLu & & \\
		Deconv2D ($4\times 4$)  & $\scriptstyle 64\times 64\times 8$ & $\scriptstyle 128\times 128\times 4$ \\
		BatchNormalization & & \\
		LeakyReLu & & \\
		Deconv2D ($4\times 4$)  & $\scriptstyle 128\times 128\times 4$ & $\scriptstyle 256\times 256\times 2$ \\
	\end{tabular}
	\caption{The architecture for disk-like surfaces.}\label{fig:disk_network}\end{table}

\begin{table}
	\centering
	\begin{tabular}{c|c|c}
		\hline
		Encoder \\ \hline
		layers  & input & output\\
		Conv3D ($4\times 4\times 4$)  & $\scriptstyle 193\times 80 \times 195 \times 1$ & $\scriptstyle 97\times 40 \times 98\times 4$\\
		BatchNormalization  & & \\
		LeakyReLu  & & \\
		Conv3D ($4\times 4\times 4$)  & $\scriptstyle 97\times 40 \times 98 \times 4$ & $\scriptstyle 49\times 20 \times 49\times 8$\\
		BatchNormalization  & & \\
		LeakyReLu  & & \\
		Conv3D ($4\times 4\times 4$)  & $\scriptstyle 49\times 20 \times 49 \times 8$ & $\scriptstyle 25\times 10 \times 25\times 16$\\
		BatchNormalization  & & \\
		LeakyReLu  & & \\
		Conv3D ($4\times 4\times 4$)  & $\scriptstyle 25\times 10 \times 25 \times 16$ & $\scriptstyle 13\times 5 \times 13\times 32$\\
		BatchNormalization  & & \\
		LeakyReLu  & & \\
		Conv3D ($4\times 4\times 4$)  & $\scriptstyle 213\times 5 \times 13 \times 32$ & $\scriptstyle 7\times 3  \times 7\times 64$\\
		BatchNormalization  & & \\
		LeakyReLu  & & \\
		FC  & $\scriptstyle 7\times 3 \times 7 \times 64$ & $\scriptstyle 200$
	\end{tabular}
	\caption{The architecture for volumetric data.}\label{fig:vol_network}
\end{table}

\newpage

\end{document}